\newcommand{\be}{\begin{eqnarray}}
\newcommand{\ee}{\end{eqnarray}}
\newcommand{\ben}{\begin{eqnarray*}}
	\newcommand{\een}{\end{eqnarray*}}
\newcommand{\bfl}{\begin{flalign*}}
	\newcommand{\efl}{\end{flalign*}}
\newcommand{\dref}[1]{(\ref{#1})}
\newcommand{\calJ}{{\mathcal J}}
\newcommand{\calI}{{\mathcal I}}
\newcommand{\calK}{{\mathcal K}}
\newtheorem{theorem}{Theorem}
\newtheorem{definition}{Definition}
\newtheorem{lemma}{Lemma}
\newtheorem{corollary}{Corollary}
\newtheorem{example}{Example}
\begin{document}
\title{An Improved Bound for Minimizing the Total Weighted Completion Time of Coflows in Datacenters
	}

\author{Mehrnoosh~Shafiee,
        and~Javad~Ghaderi\\
        Electrical Engineering Department\\
        Columbia University
\thanks{An earlier version of this paper appears as a brief announcement in SPAA 2017 Conference~\cite{ourspaawork}. Authors' emails: s.mehrnoosh@columbia.edu,  and jghaderi@ee.columbia.edu. }}

\maketitle

\begin{abstract}
	In data-parallel computing frameworks, intermediate parallel data is often produced at various stages which needs to be transferred among servers in the datacenter network (e.g. the shuffle phase in MapReduce). A stage often cannot start or be completed unless all the required data pieces from the preceding stage are received. \emph{Coflow} is a recently proposed networking abstraction to capture such communication patterns. We consider the problem of efficiently scheduling coflows with release dates in a shared datacenter network so as to minimize the total weighted completion time of coflows.
	Several heuristics have been proposed recently to address this problem, as well as a few polynomial-time approximation algorithms with provable performance guarantees. Our main result in this paper is a polynomial-time deterministic algorithm that improves the prior known results. Specifically, we propose a deterministic algorithm with approximation ratio of $5$, which improves the prior best known ratio of $12$. For the special case when all coflows are released at time zero, our deterministic algorithm obtains approximation ratio of $4$ which improves the prior best known ratio of $8$. The key ingredient of our approach is an improved linear program formulation for sorting the coflows followed by a simple list scheduling policy. Extensive simulation results, using both synthetic and real traffic traces, are presented that verify the performance of our algorithm and show improvement over the prior approaches.
\end{abstract}

% Note that keywords are not normally used for peerreview papers.
\begin{IEEEkeywords}
Scheduling Algorithms, Approximation Algorithms, Coflow, Datacenter Network
\end{IEEEkeywords}

\IEEEpeerreviewmaketitle

\section{Introduction}
\label{introduction}
Many data-parallel computing applications (e.g. MapReduce~\cite{dean2008mapreduce}, Hadoop~\cite{shvachko2010hadoop,borthakur2007hadoop}, Dryad~\cite{isard2007dryad}, etc.) consist of multiple computation and communication stages or have
machines grouped by functionality. While computation involves local operations in servers, communication takes place at the level of machine groups and involves transfer of many pieces of intermediate data across groups of machines for further processing. In such applications, the collective effect of all the flows between the two machine groups is more important than that of any of the individual flows.
A computation stage often cannot start unless all the required data pieces from the preceding stage are received, or the application latency is determined by the transfer of the last flow between the groups~\cite{chowdhury2014efficient,dogar2014decentralized}. For example, consider a MapReduce application: Each mapper performs local computations and writes intermediate data to
the disk, then each reducer pulls intermediate data from different mappers, merges them, and computes its output. The job will not finish until its last reducer is completed. Consequently, the job completion time depends on the time that the last flow of the communication phase (called shuffle) is finished. Such intermediate communication stages in a data-parallel application can account for more than $50\%$ of the job completion time~\cite{chowdhury2011managing}, and hence can have a significant impact on application performance.
Optimizing flow-level performance metrics (e.g. the average flow completion time) have been extensively studied before from both networking systems and theoretical perspective (see, e.g.,~\cite{alizadeh2014conga, he2015presto, shafiee2016simple} and references there.), however, these metrics ignore the dependence among the flows of an application which is required for the application-level performance.

Recently Chowdhury and Stoica~\cite{chowdhury2012coflow} have introduced the \textit{coflow} abstraction to capture these communication patters. \textit{A coflow is defined as a collection of parallel flows whose completion time is determined by the completion time of the last flow in the collection}. Coflows can represent most communication patterns between successive computation stages of data-parallel applications~\cite{chowdhury2014efficient}. Clearly the traditional flow communication is still a coflow with a single flow.
Jobs from one or more data-parallel applications create multiple coflows in a shared datacenter network. These coflows could vary widely in terms of the total size of the parallel flows, the number of the parallel flows, and the size of the individual flows in the coflows (e.g., see the analysis of production traces in~\cite{chowdhury2014efficient}). Classical flow/job scheduling algorithms do not perform well in this environment~\cite{chowdhury2014efficient} because each coflow consists of multiple flows-- whose completion time is dominated by its slowest flow-- and further, the progress of each flow depends on its assigned rate at \textit{both} its source and its destination. This coupling of rate assignments between the flows in a coflow and across the source-destination pairs in the network is what makes the coflow scheduling problem considerably harder than the classical flow/job scheduling problems.

In this paper, we study the coflow scheduling problem, namely, the algorithmic task of determining when to start serving each flow and at what rate, in order to minimize the weighted sum of completion times of coflows in the system. In the case of equal weights, this is equivalent to minimizing the average completion time of coflows.
\subsection{Related Work}
Several scheduling heuristics have been already proposed in the literature for scheduling coflows, e.g.~\cite{dogar2014decentralized, chowdhury2014efficient,zhao2015rapier,chowdhury2015efficient}. A FIFO-based solution was proposed in~\cite{dogar2014decentralized} which also uses multiplexing of coflows to avoid starvation of small flows which are blocked by large head-of-line flows.
A Smallest-Effective-Bottleneck-First heuristic was introduced in Varys~\cite{chowdhury2014efficient}: it sorts the coflows in an ascending order in a list based on their maximum loads on the servers, and then assigns rates to the flows of the first coflow in the list such that all its flows finish at the same time. The remaining capacity is distributed among the rest of the coflows in the list in a similar fashion to avoid under-utilization of the network. Similar heuristics without prior knowledge of coflows were introduced in Aalo~\cite{chowdhury2015efficient}. A joint scheduling and routing of coflows in datacenter networks was introduced in~\cite{zhao2015rapier} where similar heuristics based on a Minimum-Remaining-Time-First policy are developed. In~\cite{chenoptimizing}, an algorithm is proposed to provide max-min fairness among the colfows.

Here, we would like to highlight three papers~\cite{qiu2015minimizing,khuller2016brief,oursigmetricswork}
that are more relevant to our work. These papers consider the problem of minimizing the total weighted completion time of coflows with release dates (i.e., coflows arrive over time.) and provide algorithms with provable guarantees. This problem is shown to be NP-complete through its connection with the concurrent open shop problem~\cite{chowdhury2014efficient,qiu2015minimizing}, and then approximation algorithms are proposed which run in polynomial time and return a solution whose value is guaranteed to be within a constant fraction of the optimal (a.k.a., \textit{approximation ratio}). These papers rely on linear programming relaxation techniques from combinatorial scheduling literature (see, e.g.,~\cite{jurcik2009open,kim2005data,hall1996scheduling}). In~\cite{qiu2015minimizing}, the authors utilize an interval-indexed linear program formulation which helps partitioning the coflows into disjoint groups. All coflows that fall into one partition are then viewed as a single coflow, where a polynomial-time algorithm is used to optimize its completion time.
Authors in~\cite{khuller2016brief} have recently constructed an instance of the concurrent open shop problem (see~\cite{mastrolilli2010minimizing} for the problem definition) from the original coflow scheduling problem. Then applying the well-known approximation algorithms for the concurrent open shop problem to the constructed instance, an ordering of coflows is obtained which is then used in a similar fashion as in~\cite{qiu2015minimizing} to obtain an approximation algorithm. The deterministic algorithm in~\cite{khuller2016brief} has better approximation ratios compared to~\cite{qiu2015minimizing}, for both cases of with and without release dates.
In~\cite{oursigmetricswork}, a linear program approach based on ordering variables is utilized to develop two algorithms, one deterministic and the other randomized. The deterministic algorithm gives the same bounds as in~\cite{khuller2016brief}, while the randomized algorithm has better performance approximation ratios compared to~\cite{qiu2015minimizing,khuller2016brief}, for both cases of with and without release dates.
\subsection{Contribution}
\label{mainresults}
In this paper, we consider the problem of minimizing the total weighted coflow completion time. Our main contributions can be summarized as follows.

\noindent $\bullet$ \textbf{Scheduling Algorithm.} We use a Linear Program (LP) approach based on ordering variables followed by a simple list scheduling policy to develop a deterministic algorithm. Our approach improves the prior algorithms in both cases of with and without release dates. Table~\ref{comparison} summarizes our results in comparison with the prior best-known performance bounds. Performance of a deterministic (randomized) algorithm is defined based on approximation ratio, i.e., the ratio between the (expected) weighted sum of coflow completion times obtained by the algorithm and the optimal value. When coflows have release dates (which is often the case in practice as coflows are generated at different times), our deterministic algorithm improves the approximation ratio of $12$~\cite{khuller2016brief,oursigmetricswork} to $5$, which is also better than the best known randomized algorithm proposed in~\cite{oursigmetricswork} with approximation ratio of $3e$ ($\approx 8.16$). When all coflows have release times equal to zero, our deterministic algorithm has approximation ratio of $4$ while the best prior known result is $8$~\cite{khuller2016brief,oursigmetricswork} for deterministic and $2e$ $(\approx 5.436)$~\cite{oursigmetricswork} for randomized algorithms.

\noindent $\bullet$ \textbf{Empirical Evaluations.} We evaluate the performance of our algorithm, compared to the prior approaches, using both syntectic traffic as well as real traffic based on a Hive/MapReduce trace from a large production cluster at Facebook. Both synthetic and empirical evaluations show that our deterministic algorithm indeed outperforms the prior approaches. For instance, for the Facebook trace with general release dates, our algorithm outperforms Varys~\cite{chowdhury2014efficient}, deterministic algorithm proposed in~\cite{qiu2015minimizing}, and deterministic algorithm proposed in~\cite{oursigmetricswork} by $24\%$, $40\%$, and $19\%$, respectively.
\begin{table}
	\caption{Performance guarantees (Approximation ratios)}
	\centering
	\begin{tabular}{c c c} % centered columns (4 columns)
		\hline %inserts double horizontal lines
		Case & Best known & This paper \\ 
& deterministic~~~randomized& deterministic\\[0.5ex] % inserts table
		%heading
		\hline % inserts single horizontal line
\hline
		Without release dates & $8$~\cite{khuller2016brief,oursigmetricswork}~~~ $2e$~\cite{oursigmetricswork} & $4$ \\ 
		With release dates & $12$~\cite{khuller2016brief,oursigmetricswork}~~~ $3e$~\cite{oursigmetricswork}& $5$ \\
		\hline %inserts single line
	\end{tabular}
	\label{comparison}
\end{table}

\section{System Model and Problem Formulation}
\label{ProbState}
\subsection*{Datecenter Network}
Similar to~\cite{chowdhury2014efficient,qiu2015minimizing}, we abstract out the datacenter network as one giant $N \times N$ non-blocking switch, with $N$ input links connected to $N$ source servers and $N$ output links connected to $N$ destination servers. Thus the network can be viewed as a bipartite graph with source nodes denoted by set $\calI$ on one side and destination nodes denoted by set $\calJ$ on the other side. Moreover, there are capacity constraints on the input and output links. For simplicity, we assume all links have equal capacity (as in~\cite{qiu2015minimizing}); nevertheless, our method can be easily extended to the general case where the links have unequal capacities. Without loss of generality, we assume that all the link capacities are normalized to one.
\subsection*{Scheduling Constraints}
We allow a general class of scheduling algorithms where the rate allocation can be performed continuously over time, i.e., for each flow, fractional data units can be transferred from its input link to its corresponding output link over time as long as link capacity constraints are respected. In the special case that the rate allocation is restricted to data units (packets), each source node can send at most one packet in every time unit (time slot) and each destination node can receive at most one packet in every time slot, and the feasible schedule has to form a matching of the switch's bipartite graph. In this case, our model reduces to the model in~\cite{qiu2015minimizing} and, as it is shown later, our proposed algorithm will respect the matching constraints, therefore, it is compatible with both models.

\subsection*{Coflow}
A coflow is a collection of flows whose completion time is determined by the completion time of the latest flow in the collection. The coflow $k$ can be denoted as an $N \times N$ demand matrix $D^{(k)}$. Every flow is a triple $(i,j,k)$, where $i \in \calI$ is its source node, $j \in \calJ$ is its destination node, and $k$ is the coflow to which it belongs. The size of flow $(i,j,k)$ is denoted by $d_{ij}^k$, which is the $(i,j)$-th element of the matrix $D^{(k)}$.
For simplicity, we assume that all flows within a coflow arrive to the system at the same time (as in~\cite{qiu2015minimizing}); however, our results still hold for the case that flows of a coflow are released at different times (which could indeed happen in practice~\cite{chowdhury2015efficient}). A $3 \times 3$ switch architecture is shown in Figure~\ref{network} as an example, where a coflow is illustrated by means of input queues, e.g., the file in the $j$-th queue at the source link $i$ indicates that the coflow has a flow from source server $i$ to destination server $j$. For instance, in Figure~\ref{network}, the illustrated coflow has $7$ flows in total, while two of its flows have source server $1$, one goes to destination server $1$ and the other to destination server $3$.
\begin{figure}[t]
		\centering
		\includegraphics[trim={0 1.1in 0 1.5in}, clip,width=2.6 in, height=1.2 in]{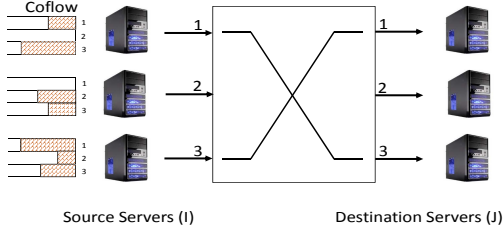}
		\caption{A coflow in a $3 \times 3$ switch architecture.}
		\label{network}
\end{figure}
	
\subsection*{Total Weighted Coflow Complettion Time}
We consider the coflow scheduling problem with release dates. There is a set of $K$ coflows denoted by $\calK$. Coflow $k\in \calK$ is released (arrives) at time $r_k$ which means it can only be scheduled after time $r_k$. We use $f_k$ to denote the finishing (completion) time of coflow $k$, which, by definition of coflow, is the time when all its flows have finished processing. In other words, for every coflow $k \in \calK$,
\begin{equation}
\label{cct}
f_k =\max_{i \in \calI, j \in \calJ}f_{i j}^k,
\end{equation}
where $f_{i j}^k$ is the completion time of flow $(i,j,k)$.

For given positive weights $w_k$, $k \in \calK$, the goal is to minimize the weighted sum of coflow completion times: $\sum_{k=1}^K w_k f_k$. The weights can capture different priority for different coflows. In the special case that all the weights are equal, the problem is equivalent to minimizing the average coflow completion time.

Define
\be \label{eq:T}
T=\max_{k \in \calK} r_k+ \sum_{k \in \calK}\sum_{i \in \calI}\sum_{j \in \calJ} d_{i j}^k.
\ee
Note that $T$ is clearly an upper bound on the minimum time required for processing of all the coflows. We denote by $x_{ij}^k(t)$ the transmission rate assigned to flow $(i,j,k)$ at time $t \in [0,T]$.
Then the optimal rate control must solve the following minimization problem
%which we refer to as TWCCT (Total Weighted Coflow Completion Time):
%over decision variables $\big \{x_{ij}^k(t) \big \}_{i,j=1}^N$ for $k=1,...,K$, and $t \in [0,T]$, flow completion times $\big \{f_{ij}^k \big \}_{i,j=1}^N$ for $k=1,...,K$, and coflow completion times $\big \{f_k \big \}_{k=1}^K$,
\begin{subequations}
\label{optimization}
\begin{align}
\mbox{minimize} \ & \sum_{k=1}^K w_k f_k\\
\label{rcct}
\text{subject to: }& f_k \geq f_{i j}^k, \ \ i \in \calI, j \in \calJ, k \in \calK\\
\label{demand}
& d_{ij}^k=\int_0^{f_{ij}^k} x_{ij}^k (t) dt, \ \ i \in \calI, j \in \calJ, k \in \calK\\
\label{capcons1}
& \sum_j \sum_k x_{ij}^k (t) \leq 1, \ \ i \in \calI,  t \in [0,T]\\
\label{capcons2}
& \sum_i \sum_k x_{ij}^k (t) \leq 1, \ \ j \in \calJ, t \in [0,T]\\
\label{releasedate}
& x_{ij}^k(t)=0, \ \ \forall t<r_k, \ i \in \calI, j \in \calJ, k \in \calK \\
\label{nonneg}
& x_{ij}^k(t) \geq 0, \ \ i \in \calI, j \in \calJ, k \in \calK,  t \in [0,T]
\end{align}
\end{subequations}
In the above, the constraint~\dref{rcct} indicates that each coflow $k$ is completed when all its flows have been completed. Note that since the optimization~\dref{optimization} is a minimization problem, a coflow completion time is equal to the completion time of its latest flow, in agreement with~\dref{cct}. The constraint~\dref{demand} ensures that the demand (file size) of every flow, $d_{ij}^k$, is transmitted by its completion time, $f_{ij}^k$.
%Note that since variable $f_{ij}^k$ is appeared in upper limit of the integral on rate allocation $x_{ij}^k$, this formulation of the problem is not linear.
Constraints~\dref{capcons1} and~\dref{capcons2} state the capacity constraints on source links and destination links, respectively. The fact that a flow cannot be transmitted before its release date (which is equal to release date of its corresponding coflow) is captured by the constraint~\dref{releasedate}. Finally, the constraint~\dref{nonneg} simply states that the rates are non-negative. 
\section{Motivations and Challenges}
\label{Overview}
The coflows can be widely different in terms of the number of parallel flows, the size of individual flows, the groups of servers involved, etc. Heuristics from traditional flow/task scheduling, such as shortest- or smallest-first policies~\cite{pinedo2015scheduling,schrage1968letter}, do not have a clear equivalence in coflow scheduling. One can define a shortest or smallest-first policy based on the number of parallel flows in a coflow, or the aggregate flow sizes in a coflow, however these policies perform poorly~\cite{chowdhury2014efficient}, as they do not completely take all the characteristics of coflows into consideration.

Recall that the completion time of a coflow is dominated by its slowest flow (as described by \dref{cct} or \dref{rcct}). Hence, it makes sense to slow down all the flows in a coflow to match the
completion time of the flow that will take the longest to finish. The unused capacity then can be used to allow other coexisting coflows to make progress and the total (or average) coflow completion time decreases. Varys~\cite{chowdhury2014efficient} is the first heuristic that effectively implements this intuition by combining Smallest-Effective-Bottleneck-First and Minimum-Allocation-for-Desired-Duration policies. Before describing Varys, we present a few definitions that are used in the rest of this paper.

\begin{definition}[Aggregate Size and Effective Size of a Coflow]
\label{definition1}
Let
\begin{equation}
\label{restrictedsize}
d_i^k= \sum_{j \in \calJ} d_{ij}^k;\ \ d_j^k= \sum_{i \in \calI} d_{ij}^k,
\end{equation}
be respectively the aggregate flow size that coflow $k$ needs to send from source node $i$ and receive at destination node $j$.
The effective size of coflow $k$ is defined as
\begin{equation}
\label{coflowload}
W(k)=\max \{\max_{i \in \calI} d_i^k , \max_{j \in \calJ} d_j^k\}.
\end{equation}
\end{definition}
Thus $W(k)$ is the maximum amount of data that needs to be sent or received by a node for coflow $k$. Note that, due to normalized capacity constraints on links, \textit{when coflow $k$ is released, we need at least $W(k)$ amount of time to process all its flows}.

\textbf{Overview of Varys.}
Varys~\cite{chowdhury2014efficient} orders coflows in a list based on their effective size in an increasing order. Transmission rates of individual flows of the first coflow in the list are set such that all its flows complete at the same time. The remaining capacity of links are updated and iteratively distributed among other coflows in the list in a similar fashion. Formally, the completion time of coflow $k$, $k=1,...,K$, is calculated as follows
\begin{equation*}
\Gamma^k=\max\{\max_{i \in \calI} \frac{d_i^k}{Rem(i)}, \max_{j \in \calJ} \frac{d_j^k}{Rem(j)} \},
\end{equation*}
where $Rem(i)$ (similarly, $Rem(j)$) is the remaining capacity of input link $i$ (output link $j$) after transmission rates of all coflows $k^\prime < k$ are set. Then for flow $(i,j,k)$, Varys assigns transmission rate $x_{ij}^k=d_{ij}^k/\Gamma^k$. In case that there is still idle capacity, for each input link $i \in \calI$, the remaining capacity is allocated to the flows of coflows subject to capacity constraints in corresponding output links. Once the first coflow completes, all the flow sizes and the scheduling list are updated and the iterative procedure is repeated to complete the second coflow and distribute the unused capacity. The procedure is stopped when all the coflows are processed.

While Varys performs better than traditional flow scheduling algorithms, it could still be inefficient. The main reason is that \textit{Varys is oblivious to the dependency among coflows who share a (source or destination) node}. To further expose this issue, we present a simple example.
\begin{example}[Inefficiency of Varys]
Consider the $2 \times 2$ switch network illustrated in Figure~\ref{example} where there are $3$ coflows in the system. In Figure~\ref{example1}, the effective coflow sizes are $W(1)=W(2)=W(3)=1$, therefore, Varys cannot differentiate among coflows. Scheduling coflows in the order $\{1,2,3\}$ or $\{2,3,1\}$ are both possible under Varys but they result in different total completion times, $1+2+2=5$ and $1+1+2=4$, respectively (assuming the weights are all one for all the coflows). Next, consider a slight modification of flow sizes, as shown in Figure~\ref{example2}. In this example $W(1)=2$ and $W(2)=W(3)=3$. Based on Varys algorithm, coflow $1$ is scheduled first during time interval $(0,2]$ at rate $1$. When coflow $1$ completes, coflows $2$ and $3$ are scheduled in time interval $(2,5]$; hence, the total completion time will be $2+5+5=12$. However, if we schedule coflows $2$ and $3$ first, the total completion times will reduce to $3+3+5=11$. Note that in both examples, coflow $1$ completely blocks coflows $2$ and $3$, which is not captured by Varys. In fact, the negative impact of ignoring configuration of coflows and their shared nodes is much more profound in large networks with a large number of coflows (see simulations in Section~\ref{simulation}).
 \begin{figure}[t]
 	\centering
 	\begin{subfigure}[t]{1\columnwidth}
 		\centering
 		\includegraphics[width=\textwidth]{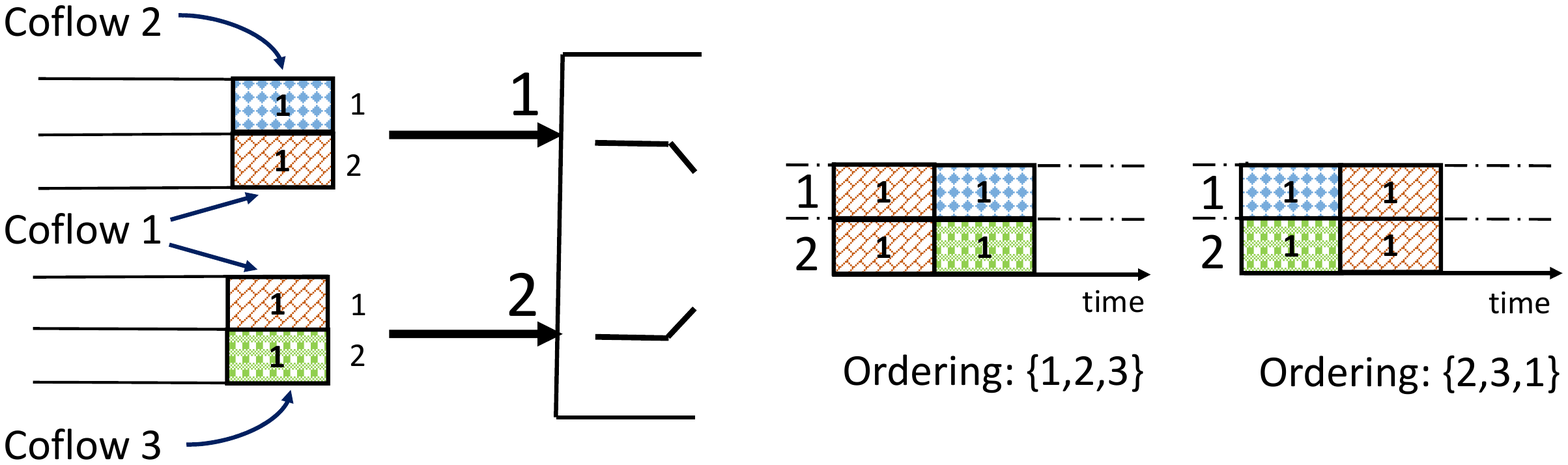}%
 		\caption{All coflows have equal effective size. Both orderings are possible under Varys, with the total completion time of $1+2+2=5$ and $1+1+2=4$, for the left and right ordering respectively.}
 		\label{example1}%
 	\end{subfigure}\hfill
% 	\vspace{-.1in}
 	\begin{subfigure}[t]{1\columnwidth}
 		\centering
 		\includegraphics[width=\textwidth]{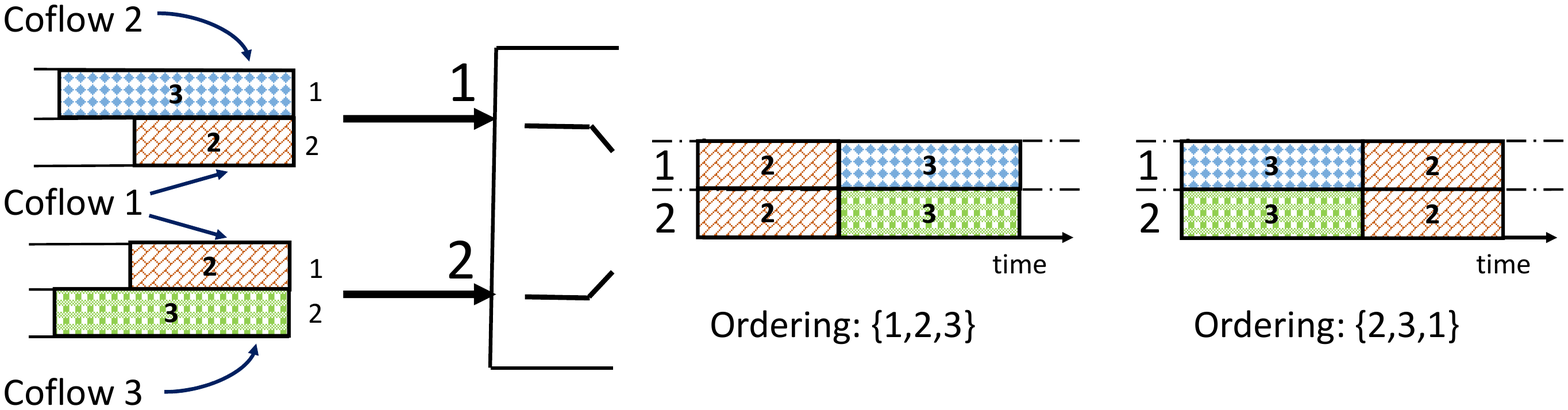}%
 		\caption{Varys schedules coflow $1$ first, according to the ordering $\{1,2,3\}$, which gives a total completion time of $2+5+5=12$. The optimal schedule is the ordering $\{2,3,1\}$ with a total completion time of $3+3+5=11$.}%
 		\label{example2}%
 	\end{subfigure}
 	\caption{Inefficiency of Varys in a $2 \times 2$ switch network with $3$ coflows.}
 	\label{example}
 	\vspace{-.2in}
 \end{figure}
\end{example}

\textbf{Overview of LP-based algorithms.} The papers~\cite{qiu2015minimizing} and~\cite{oursigmetricswork} use Linear Programs (LPs) (based on interval-indexed variables or ordering variables) that capture more information about coflows and provide a better ordering of coflows for scheduling compared to Varys~\cite{chowdhury2014efficient}. At the high level, the technical approach in these papers is based on partitioning jobs (coflows) into polynomial number of groups based on solution to a polynomial-sized relaxed linear program, and minimizing the completion time of each group by treating the group as a single coflow. Grouping can have a significant impact on decreasing the completion time of coflows. For instance, in view of examples in Figure~\ref{example}, grouping coflows $2$ and $3$, and scheduling them first, decreases the total completion time as explained.

\textbf{NP-hardness and connection to the concurrent open shop problem.} The concurrent open shop problem~\cite{mastrolilli2010minimizing} can be essentially viewed as a special case of the coflow scheduling problem \emph{when demand matrices are diagonal} (in the jargon of concurrent open shop problem, the coflows are jobs, the flows in each coflow are tasks for that job, and the destination nodes are machines with unit capacities). It is known that it is NP-complete to approximate the concurrent open shop problem, when jobs are released at time zero, within a factor better than $2-\epsilon$ for any $\epsilon>0$~\cite{sachdeva2013optimal}. Although the model we consider for coflow scheduling is different from the model used in~\cite{qiu2015minimizing}, similar reduction as proposed in~\cite{qiu2015minimizing} can be leveraged to show NP-completness of the coflow scheduling problem. More precisely, every instance of the concurrent open shop problem can be reduced to an instance of coflow scheduling problem (see Appendix~\ref{2appexp} for the details), hence it is NP-complete to approximate the coflow scheduling problem (without release dates) within $2-\epsilon$, for any $\epsilon>0$. There are $2$-approximation algorithms for the concurrent open shop (e.g.,~\cite{mastrolilli2010minimizing}), however, these algorithms cannot be ported to the coflow scheduling problem due to the coupling of source and destination link capacity constraints in the coflow scheduling problem (see Appendix~\ref{2appexp} for a counter example).

Next, we describe our coflow scheduling algorithm. The algorithm is based on a linear program formulation for sorting the coflows followed by a simple list scheduling policy
%nough to provide an efficient order of coflows, and actual scheduling of flows of coflows is also of great importance to achieve good performance. Neglect of this fact yields the claimed $2$-approximation algorithm in~\cite{luo2016towards}, for coflow scheduling when all the release times are zero, to be flawed. This algorithm relies on the well-known $2$-approximation algorithm for concurrent open shop problem~\cite{mastrolilli2010minimizing} to obtain an ordering for coflows combined with an assumption for scheduling coflows which does not hold in general, as we show here.
%The assumption is the following: given the ordering of $K$ coflows, there always exists a schedule in which the first coflow completes at time $W(1)$, the second coflow completes at time $W(1,2)$, and so on, till the last coflow completes at time $W(1,\cdots,K)$. We provide a counter example to show that such schedule does not always exist.
%\input{counterexample} 
\section{Linear Programing (LP) Relaxation}
\label{LPRelaxation}
In this section, we use \textit{linear ordering variables} (see, e.g., \cite{potts1980algorithm,hall1996scheduling,gandhi2008improved,mastrolilli2010minimizing}) to present a relaxed integer program of the original scheduling problem~\dref{optimization}. We then relax these variables to obtain a linear program (LP). In the next section, we use the optimal solution to this LP as a subroutine in our deterministic algorithm.

\textbf{Ordering variables.} For each pair of coflows, if both coflows have \textit{some} flows incident at some node (either originated from or destined at that node), we define a binary variable which indicates which coflow finishes all its flows before the other coflow does so in the schedule. Formally, for any two coflows $k, k^\prime$ with aggregate flow sizes $d^k_m\neq 0$ and $d^{k\prime}_m\neq 0$ on some node $m \in \calI \cup \calJ$
%define $\calK_m$ as the set of coflows incident to node $m$, i.e.,
%\be
%\calK_m=\{k \in \calK: d^k_m\neq 0\},
%\ee
%where $d^k_m$ is the aggregate size of flows of coflow $k$ at node $m$, as
(recall definition \dref{restrictedsize}), we introduce a binary variable $\delta_{k k^\prime} \in \{0,1\}$ such that $\delta_{k k^\prime}=1$ if coflow $k$ finishes all its flows before coflow $k^\prime$ finishes all its flows, and it is $0$ otherwise. If both coflows finish their flows at the same time (which is possible in the case of continuous-time rate control), we set either one of $\delta_{k k^\prime}$ or $\delta_{k^\prime k}$ to $1$ and the other one to $0$, arbitrarily.

\textbf{Relaxed Integer Program (IP).} We formulate the following Integer Program (IP):
\begin{subequations}
\label{RIP}
\begin{align}
\label{ILobj}
\mathbf{(IP)}\ \min \ \ &\sum_{k=1}^K w_k f_k \\
\label{matching1}
& f_k \geq  d_i^k+ \sum_{k^\prime\in \calK}d_i^{k^\prime} \delta_{k^\prime k} \ \ i \in \calI, k \in \calK \\
\label{matching2}
& f_k \geq  d_j^k+\sum_{k^\prime\in \calK} d_j^{k^\prime} \delta_{k^\prime k} \ \ j \in \calJ, k \in \calK\\
\label{rt}
& f_k \geq W(k) + r_k \ \ k \in \calK\\
\label{prec1}
&\delta_{k k^\prime}+\delta_{k^\prime k}=1 \ \ k,k^\prime \in \calK\\
\label{int1}
&\delta_{k k^\prime} \in \{0,1\} \ \ k,k^\prime \in \calK
\end{align}
\end{subequations}
In the above, to simplify the formulation, we have defined $\delta_{k k^\prime}$, for all pairs of coflows, by defining $d_m^k=0$ if coflow $k$ has no flow originated from or destined to node $m$.

The constraint~\dref{matching1} (similarly~\dref{matching2}) follows from the definition of ordering variables and the fact that flows incident to a source node $i$ (a destination node $j$) are processed by a single link of unit capacity. To better see this, note that the total amount of traffic can be sent in the time period $(0,f_k]$ over the $i$-th link is at most $f_k$. This traffic is given by the right-hand-side of~\dref{matching1} (similarly~\dref{matching2}) which basically sums the aggregate size of coflows incident to node $i$ that finish their flows before coflow $k$ finishes its corresponding flows, plus the aggregate size of coflow $k$ at node $i$ itself, $d_i^k$. This implies constraint~\dref{matching1} and~\dref{matching2}.
The fact that each coflow cannot be completed before its release date plus its effective size is captured by constraint~\dref{rt}. The next constraint~\dref{prec1} indicates that for each two incident coflows, one precedes the other.

Note that this optimization problem is a relaxed integer program for the problem~(\ref{optimization}), since the set of constraints are not capturing all the requirements we need to meet for a feasible schedule. For example, we cannot start scheduling flows of a coflow when it is not released yet, while constraint~\dref{rt} does not necessarily avoid this, thus leading to a smaller value of finishing time compared to the optimal solution to \dref{optimization}. Further, release dates and scheduling constraints in optimization (\ref{optimization}) might cause idle times in flow transmission of a node, therefore yielding a larger value of finishing time for a coflow than what is restricted by \dref{matching1}, \dref{matching2}, \dref{rt}. To further illustrate this issue, we present a simple example.

\begin{example} Consider a $2 \times 2$ switch network as in Figure~\ref{network2}. Assume there are 4 coflows, each has one flow. Flow $(1,1,1)$ is released at time $0$ with size $1$, and the other three flows are released at time $1$ with size $2$. It is easy to check that the following values for the ordering variables and flow completion times satisfy all the constraints~\dref{matching1}$-$\dref{int1}. For brevity, we only report the ordering variables for coflows that actually share a node. For example, it is redundant to consider ordering variables corresponding to coflow $1$ and coflow $4$ as they are not incident at any (source/destination) node and any value for their associated pairwise ordering variables does not have any impact on the optimal value for IP \dref{RIP}. Below, the ordering variables and coflow completion times are presented, and all the ordering variables which are not specified can be taken as zero.
\begin{equation*}
\begin{aligned}
&\delta_{12}=1, \ \ &&\delta_{34}=1,\\
&\delta_{13}=1, \ \ &&\delta_{24}=1,\\
&f_1=1, \ \ &&f_2=3, \\
&f_3=3, \ \ &&f_4=4. \\
\end{aligned}
\end{equation*}
\begin{figure}[t]
	\centering
	\includegraphics[width=0.75\columnwidth, height=1in]{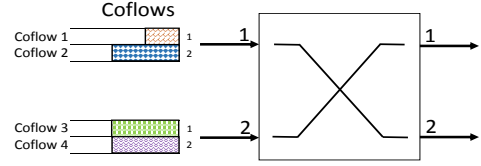}
	\caption{4 coflows in a $2 \times 2$ switch architecture, flow $(1,1)$ is released at time $0$, and all the others are released at time $1$.}
	\label{network2}
\vspace{-0.1in}
\end{figure}
While these values satisfy ~\dref{matching1}$-$\dref{int1}, this is not a valid schedule since it requires transmission of flow $(2,2,4)$ starting at time $0$, while it is not released yet. To see this, note that $f_1=1$, so to finish processing of coflow $1$ or equivalently flow $(1,1,1)$ by time $1$, we need to start its transmission at maximum rate at time $0$. Then, due to the capacity constraints, the first time flows $(1,2,2)$ and $(2,1,3)$ can start transmission is at time $1$, when flow $(1,1,1)$ has been completed. Since we require to complete both of these flows at time $3$, they need to be transmitted at maximum rate in the time interval $(1,3]$. Therefore, the only way to finish flow $(2,2,4)$ at time $4$ is to send one unit of its data in time interval $(0,1]$ and its remaining unit of data in time interval $(3,4]$, but this flow has not been released before time $1$. So the proposed IP does not address all the scheduling constraints.
\end{example}
\textbf{Relaxed Linear Program (LP).} In the linear program relaxation, we allow the ordering variables to be fractional. Specifically, we replace the constraints~\dref{int1} with the constraints \dref{fraction1} below. We refer to the obtained linear problem by (LP).
\begin{subequations}
\label{LP}
\begin{align}
\mathbf{(LP)}\ \min \ \ &\sum_{k=1}^K w_k f_k  \\
\text{subject to:}&\ \text{\dref{matching1} -- \dref{prec1}},\nonumber \\
\label{fraction1}
&\delta_{k k^\prime}  \in [0,1] \ \ k,k^\prime \in \calK
\end{align}
\end{subequations}
We use $\tilde{f}_k$ to denote the optimal solution to this LP for the completion time of coflow $k$. Also we use $\widetilde{\text{OPT}}=\sum_k w_k \tilde{f_k}$ to denote the corresponding objective value. Similarly we use $f_k^\star$ to denote the optimal completion time of coflow $k$ in the original coflow scheduling problem \dref{optimization}, and use $\text{OPT}=\sum_k w_k {f_k^\star}$ to denote its optimal objective value. The following lemma establishes a relation between $\widetilde{\text{OPT}}$ and $\text{OPT}$.
\begin{lemma}
	\label{optlb}
	The optimal value of the LP, $\widetilde{\text{OPT}}$, is a lower bound on the optimal total weighted completion time $\text{OPT}$ of coflow scheduling problem.
\end{lemma}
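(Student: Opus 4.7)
The plan is a standard ``feasible transfer'' argument: take any optimal schedule of the original problem \dref{optimization} and exhibit from it a feasible solution to (LP) whose objective equals $\text{OPT}$. Since $\widetilde{\text{OPT}}$ is by definition the minimum objective over all feasible (LP) solutions, this immediately gives $\widetilde{\text{OPT}}\le \text{OPT}$.

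Concretely, let $\{x_{ij}^{k,\star}(t)\}$ be an optimal rate allocation for \dref{optimization} and let $f_k^\star$ be the associated coflow completion times. I would define an (LP) solution by setting $f_k := f_k^\star$, and for every ordered pair $(k,k^\prime)$ set
\[
\delta_{kk^\prime} := 1 \text{ if } f_k^\star < f_{k^\prime}^\star, \qquad \delta_{kk^\prime}:=0 \text{ if } f_k^\star > f_{k^\prime}^\star,
\]
breaking any ties $f_k^\star = f_{k^\prime}^\star$ by assigning $\delta_{kk^\prime}=1$ and $\delta_{k^\prime k}=0$ arbitrarily. Then \dref{prec1} and \dref{fraction1} hold by construction, and \dref{rt} holds because coflow $k$ cannot begin before time $r_k$ (by \dref{releasedate}) and, once released, needs at least $W(k)$ units of time on its most-loaded node under the unit link capacities \dref{capcons1}--\dref{capcons2}.

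The main step is to verify the load constraints \dref{matching1} (and symmetrically \dref{matching2}). Fix a source node $i$ and a coflow $k$, and consider the total amount of data sent out of link $i$ during $[0,f_k^\star]$ by the optimal schedule. Integrating \dref{capcons1} from $0$ to $f_k^\star$ gives
\[
f_k^\star \;\ge\; \int_0^{f_k^\star} \sum_{j,k''} x_{ij}^{k'',\star}(t)\, dt.
\]
All of coflow $k$'s traffic at node $i$ must be completed by time $f_k^\star$, contributing $d_i^k$ to the right-hand side. Moreover, any coflow $k^\prime$ with $\delta_{k^\prime k}=1$ satisfies $f_{k^\prime}^\star \le f_k^\star$, so all of its traffic on link $i$, of total size $d_i^{k^\prime}$, has also been completely sent by time $f_k^\star$ (and the ``contributes $0$'' cases with $d_i^{k^\prime}=0$ do not matter). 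These contributions come from disjoint flows, hence add up, which yields exactly
\[
f_k^\star \;\ge\; d_i^k + \sum_{k^\prime \in \calK} d_i^{k^\prime}\, \delta_{k^\prime k},
\]
i.e., \dref{matching1}. The argument for destination nodes is identical using \dref{capcons2}, giving \dref{matching2}.

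The only subtlety (and the one place to be careful) is handling simultaneous completions, $f_k^\star = f_{k^\prime}^\star$: with the tie-breaking above one of the two $\delta$'s is $1$, but the argument in the previous paragraph still works because both coflows are fully completed by the common time $f_k^\star$, so whichever one we credit, its entire $d_i$ mass fits into the window $[0,f_k^\star]$. Hence the constructed $(f_k,\delta_{kk^\prime})$ is feasible for (LP), and its objective is $\sum_k w_k f_k^\star = \text{OPT}$, proving $\widetilde{\text{OPT}}\le \text{OPT}$.
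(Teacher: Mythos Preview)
Your proposal is correct and follows essentially the same approach as the paper's proof: construct ordering variables from the optimal schedule's completion times, verify \dref{matching1}--\dref{matching2} by integrating the capacity constraints over $[0,f_k^\star]$, and verify \dref{rt} from the release-date and demand constraints. The paper's argument is more terse (it simply invokes the integration of \dref{capcons1}--\dref{capcons2} and the combination of \dref{demand} and \dref{releasedate} without writing out the sums), but the logic is identical; your version just spells out the details, including the tie-breaking subtlety.
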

\begin{proof}
	Consider an optimal solution to the optimization problem~\dref{optimization}. We set ordering variables so as $\delta_{kk^\prime}=1$ if coflow $k$ precedes coflow $k^\prime$ in this solution, and $\delta_{kk^\prime}=0$, otherwise. If both coflows finish their corresponding flows at the same time, we set either one to $1$ and the other one to $0$. We note that this set of ordering variables and coflow completion times satisfies constraints~\dref{matching1} and~\dref{matching2} (by taking integral from both side of constraint~\dref{capcons1} and~\dref{capcons2} from time $0$ to $f_k$) and also constraint~\dref{rt} (by combining constraints~\dref{demand} and~\dref{releasedate}). Furthermore, the rest of (LP) constraints are satisfied by the construction of ordering variables. Therefore, optimal solution of problem~\dref{optimization} can be converted to a feasible solution to (LP). Hence, the optimal value of LP, $\widetilde{\text{OPT}}$, is at most equal to $\text{OPT}$.
\end{proof} 
\section{Approximation Algorithm}
\label{AppAlg}
In this section, we describe our polynomial-time coflow scheduling algorithm and state the main results about its performance guarantees.

The algorithm has three steps: (i) solve the relaxed LP~\dref{LP}, (ii) use the solution of the relaxed LP to order flows of coflows, and (iii) apply a simple list scheduling algorithm based on the ordering. The relaxed LP~\dref{LP} has $O(K^2)$ variables and $O(K^2+KN))$ constraints and can be solved efficiently in polynomial time, e.g. using interior point method~\cite{renegar1988polynomial}.

The approximation algorithm is depicted in Algorithm~\ref{Alg1}. We order coflows based on values of $\tilde{f}_k$ (optimal solution to LP) in nondecreasing order. More precisely, we re-index coflows such that,
\begin{equation}
\label{order}
\tilde{f}_1 \leq \tilde{f}_2 \leq ... \leq \tilde{f}_K.
\end{equation}
Ties are broken arbitrarily. The algorithm then maintains a list of flows such that for every two flows $(i,j,k)$ and $(i^\prime,j^\prime,k^\prime)$ with $k < k^\prime$, flow $(i,j,k)$ is placed before flow $(i^\prime,j^\prime,k^\prime)$ in the list. Flows of the same coflow are listed in an arbitrary order. The algorithm scans the list starting from the first flow and schedules a flow if both its corresponding source and destination links are idle at that time. Upon completion of a flow or arrival of a coflow, the algorithm preempts the schedule, updates the list, and starts scheduling the flows in the updated list.
\begin{algorithm}
	\caption{Deterministic Coflow Scheduling Algorithm}
	\label{Alg1}
	\begin{algorithmic}
	 \State Suppose coflows $\bigg \{d_{ij}^k \bigg \}_{i,j=1}^N$, $k \in \calK$, with release dates $r_k$, $k \in \calK$, and weights $w_k$, $k \in \calK$, are given.
	\begin{algorithmic}[1]
		\State Solve the linear program (LP) and denote its optimal solution by $\{\tilde{f_k}; k \in \calK\}$.
		\State Order and re-index the coflows such that:
		\begin{equation}
		\label{order2}
		\tilde{f}_1 \leq \tilde{f}_2 \leq ... \leq \tilde{f}_K,
		\end{equation}
		where ties are broken arbitrarily.
		\State Wait until the first coflow is released.
			\While {There is some incomplete flow,}
			\State List the released and incomplete flows respecting the ordering in~\dref{order2}. Let $L$ be the total number of flows in the list.
			\For {$l=1$ to $L$}
			\State Denote the $l$-th flow in the list by $(i_l,j_l,k_l)$,
				\If {Both the links $i_l$ and $j_l$ are idle,}
				\State Schedule flow $(i_l,j_l,k_l)$.
				\EndIf
			\EndFor
			\While {No flow is complete and no coflow is released}
			\State Transmit the flows that get scheduled in line $9$ with rate $1$.
			\EndWhile
			\EndWhile

	\end{algorithmic}
	\end{algorithmic}
\end{algorithm}

The main result regarding the performance of Algorithm~\ref{Alg1} is stated in Theorem~\ref{detalgper}.
\begin{theorem}
	\label{detalgper}
	Algorithm~\ref{Alg1} is a deterministic $5$-approximation algorithm for the problem of minimizing total weighted completion time of coflows with release dates.
\end{theorem}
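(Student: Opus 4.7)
I aim to prove the per-coflow bound $f_k^A \leq 5\tilde{f}_k$ for every $k\in\calK$, where $f_k^A$ is the completion time of coflow $k$ produced by Algorithm~\ref{Alg1}; multiplying by $w_k$, summing, and invoking Lemma~\ref{optlb} then delivers $\sum_k w_k f_k^A \leq 5\widetilde{\text{OPT}} \leq 5\,\text{OPT}$. I split the per-coflow bound into two complementary pieces: an \emph{algorithmic} inequality that controls $f_k^A$ via the aggregate loads $\sum_{k'\leq k}d_i^{k'}$ and $\sum_{k'\leq k}d_j^{k'}$ at the source and destination of coflow $k$'s latest-completing flow, and an \emph{LP} inequality that controls each such load by $2\tilde{f}_k$.

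\textbf{Algorithmic bound.} Reindex coflows as in~\dref{order2} and let $(i,j,k)$ be the flow of coflow $k$ that completes last, so $f_k^A$ equals its completion time. The scan-and-match rule of Algorithm~\ref{Alg1} guarantees that at every $t\in[r_k,f_k^A]$, either $(i,j,k)$ is in service, or some earlier-listed flow---necessarily from a coflow $k'\leq k$ and distinct from $(i,j,k)$---occupies link $i$ or link $j$ in the current matching. Hence
\begin{equation*}
f_k^A - r_k \;\leq\; d_{ij}^k + \Bigl(\sum_{k'\leq k}d_i^{k'}-d_{ij}^k\Bigr) + \Bigl(\sum_{k'\leq k}d_j^{k'}-d_{ij}^k\Bigr) \;=\; \sum_{k'\leq k}d_i^{k'} + \sum_{k'\leq k}d_j^{k'} - d_{ij}^k,
\end{equation*}
where the two parenthesized terms bound the total duration in $[r_k,f_k^A]$ during which $i$ (respectively $j$) is busy serving such a blocking $\{k'\leq k\}$-flow. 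Flows of coflows $k''>k$ may well be transmitted during $[r_k,f_k^A]$, but they never block $(i,j,k)$ because the scan reaches $(i,j,k)$ first.

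\textbf{LP bound and conclusion.} I prove the key inequality $\sum_{k'\leq k}d_i^{k'} \leq 2\tilde{f}_k$ for every $i\in\calI$ (and symmetrically for every $j\in\calJ$) by multiplying the LP constraint~\dref{matching1} for each $k'\leq k$ by the weight $d_i^{k'}$ and summing; using $\tilde{f}_{k'}\leq\tilde{f}_k$ on the left and discarding the non-negative terms with indices exceeding $k$ on the right yields
\begin{equation*}
\tilde{f}_k\sum_{k'\leq k}d_i^{k'} \;\geq\; \sum_{k'\leq k}(d_i^{k'})^2 \;+\; \sum_{\substack{k',k''\leq k\\ k'\neq k''}} d_i^{k'}d_i^{k''}\,\tilde{\delta}_{k''k'}.
\end{equation*}
Pairing each ordered index pair $(k',k'')$ with its reverse $(k'',k')$ and invoking $\tilde{\delta}_{k''k'}+\tilde{\delta}_{k'k''}=1$ from~\dref{prec1} collapses the double sum to $\tfrac{1}{2}\bigl[(\sum_{k'\leq k}d_i^{k'})^2-\sum_{k'\leq k}(d_i^{k'})^2\bigr]$, so the right-hand side equals $\tfrac{1}{2}\bigl[(\sum_{k'\leq k}d_i^{k'})^2+\sum_{k'\leq k}(d_i^{k'})^2\bigr]\geq\tfrac{1}{2}\bigl(\sum_{k'\leq k}d_i^{k'}\bigr)^2$, and dividing gives the claim. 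Substituting into the algorithmic bound produces $f_k^A\leq r_k+4\tilde{f}_k$, and then the inequality $\tilde{f}_k\geq r_k+W(k)\geq r_k$ from~\dref{rt} finishes the proof with $f_k^A\leq 5\tilde{f}_k$.

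\textbf{Main obstacle.} The trickiest step is the pair-counting in the LP inequality: one must truncate the inner summation to $k''\leq k$ \emph{before} applying~\dref{prec1}, so that each unordered pair $\{k',k''\}\subseteq[k]$ contributes exactly $d_i^{k'}d_i^{k''}$ through the symmetry $\tilde{\delta}_{k'k''}+\tilde{\delta}_{k''k'}=1$. Any attempt to derive the bound from the LP constraint for $\tilde{f}_k$ alone (without leveraging the constraints for smaller-indexed coflows) loses a factor that grows with $k$, so the weighted summation of the constraints for all $k'\leq k$ is essential for the uniform factor of $2$.
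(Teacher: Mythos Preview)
Your proof is correct and follows essentially the same approach as the paper. The only cosmetic differences are that you inline the argument the paper packages as Lemma~\ref{bound1} and you bound the loads $\sum_{k'\leq k}d_i^{k'}$ and $\sum_{k'\leq k}d_j^{k'}$ directly at the two relevant nodes rather than passing through the intermediate maximum $W(1,\ldots,k)$; the underlying LP-symmetry calculation and the list-scheduling busy-time argument are identical to the paper's.
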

When all coflows are released at time $0$, we can improve the algorithm performance ratio.
\begin{corollary}
	\label{cor1}
	If all coflows are released at time $0$, then Algorithm~\ref{Alg1} is a deterministic $4$-approximation algorithm.
\end{corollary}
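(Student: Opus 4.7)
Since Corollary~\ref{cor1} is the $r_k=0$ specialization of Theorem~\ref{detalgper}, the plan is to prove the theorem in full and then observe that dropping the release-date term tightens the ratio from $5$ to $4$. The overall strategy is to upper bound each algorithmic completion time $f_k$ by a constant multiple of its LP value $\tilde f_k$, and then invoke Lemma~\ref{optlb} to get $\sum_k w_k f_k \leq c\,\text{OPT}$. Two ingredients are needed: an LP-based bound on the cumulative loads $A_i^k := \sum_{k'\leq k} d_i^{k'}$ and $B_j^k := \sum_{k'\leq k} d_j^{k'}$, and a list-scheduling bound on $f_k$ in terms of these loads and $r_k$.

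First, I would prove $A_i^k \leq 2\tilde f_k$ for every $i$ and $k$ (and analogously $B_j^k \leq 2\tilde f_k$ via (\ref{matching2})). Multiply constraint (\ref{matching1}) written at each $k' \leq k$ by $d_i^{k'}$ and sum over $k' \leq k$. On the left-hand side, using the ordering (\ref{order}) to replace $\tilde f_{k'}$ by $\tilde f_k$ gives at most $\tilde f_k\cdot A_i^k$. On the right-hand side, restrict the inner sum to $k'' \leq k$ (a valid weakening because $\delta\geq 0$); the off-diagonal terms $d_i^{k'} d_i^{k''}\delta_{k'' k'}$ then symmetrize via (\ref{prec1}) into $\tfrac{1}{2}\bigl[(A_i^k)^2 - \sum_{k'\leq k}(d_i^{k'})^2\bigr]$, while the diagonal contributes $\sum_{k'\leq k}(d_i^{k'})^2$. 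Thus $\tilde f_k\cdot A_i^k \geq \tfrac{1}{2}(A_i^k)^2 + \tfrac{1}{2}\sum_{k'\leq k}(d_i^{k'})^2$, and dividing by $A_i^k$ and dropping the nonnegative remainder yields $\tilde f_k \geq \tfrac{1}{2} A_i^k$.

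Second, I analyze Algorithm~\ref{Alg1} for a fixed flow $(i,j,k)$ with $d_{ij}^k > 0$. By the greedy list-matching rule, at any time $t \in [r_k, f_{ij}^k)$ at which $(i,j,k)$ is not being transmitted, some released flow earlier in the list---either from a coflow $k' < k$ or from coflow $k$ itself placed ahead of $(i,j,k)$ in the internal ordering---must be occupying source $i$ or destination $j$. The idle time of $(i,j,k)$ during $[r_k, f_{ij}^k)$ is therefore at most the busy time at $i$ plus the busy time at $j$ from such higher-priority flows (using $\mathbf{1}[A\vee B]\leq \mathbf{1}[A]+\mathbf{1}[B]$), which is bounded by $(A_i^k - d_{ij}^k)+(B_j^k - d_{ij}^k)$. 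Rearranging gives $f_{ij}^k \leq r_k + A_i^k + B_j^k - d_{ij}^k$, and maximizing over $(i,j)$ yields $f_k \leq r_k + \max_i A_i^k + \max_j B_j^k$.

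Combining the two steps with $r_k \leq \tilde f_k$ from (\ref{rt}) gives $f_k \leq \tilde f_k + 2\tilde f_k + 2\tilde f_k = 5\tilde f_k$, so $\sum_k w_k f_k \leq 5\widetilde{\text{OPT}} \leq 5\,\text{OPT}$, proving Theorem~\ref{detalgper}. For Corollary~\ref{cor1}, the hypothesis $r_k=0$ simply drops the additive $\tilde f_k$ and produces $f_k \leq 4\tilde f_k$, giving ratio $4$. The main technical obstacle I anticipate is the first step: the symmetrization closes cleanly only after restricting the inner sum to $k''\leq k$, and one must check that the resulting quadratic inequality is strong enough to deliver the factor $2$ rather than degrading it, which is what ultimately caps the overall ratio at $5$.
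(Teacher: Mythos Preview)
Your proposal is correct and follows essentially the same route as the paper. Your quantities $A_i^k$ and $B_j^k$ are exactly the paper's $W(1,\cdots,k;i)$ and $W(1,\cdots,k;j)$; your first step is the content of Lemma~\ref{bound1} (with the same symmetrization via \dref{prec1}), and your list-scheduling bound $f_{ij}^k \le r_k + A_i^k + B_j^k - d_{ij}^k$ is the paper's $t_k$ estimate rephrased as ``idle time $+$ processing time'' instead of ``start time $+$ remaining size,'' after which the specialization $r_k=0$ drops one $\tilde f_k$ to yield the factor~$4$ exactly as in the paper.
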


\section{Proof Sketch of Main Results}\label{proofs}
In this section, we present the sketch of proofs of the main results for our polynomial-time coflow scheduling algorithm. Before proceeding with the proofs, we make the following definitions.
\begin{definition}[Aggregate Size and Effective Size of a List of Coflows]\label{def2}
For a list of $K$ coflows and for a node $s \in \calI \cup \calJ$, we define $W(1,\cdots,k;s)$ to be the amount of data needs to be sent or received by node $s$ in the network considering only the first $k$ coflows. We also denote by $W(1,\cdots,k)$ the effective size of the aggregate coflow constructed by the first $k$ coflows, $k\leq K$. Specifically,
\begin{equation}
\label{nodeload}
W(1,\cdots,k;s) = \sum_{l=1}^k d_{s}^l
\end{equation}
\begin{equation}
\label{load}
W(1,\cdots,k)=\max_{s \in \calI \cup \calJ} W(1,\cdots,k;s)
\end{equation}
\end{definition}
\subsection{Bounded completion time for the collection of coflows}
Consider the list of coflows according to the ordering in~\dref{order} and define $W(1,\cdots,k)$ based on Definition~\ref{def2}.
The following lemma demonstrates a relationship between completion time of coflow $k$ obtained from (LP) and $W(1,\cdots,k)$ which is used later in the proofs.
\begin{lemma}
	\label{bound1}
	$\tilde{f}_k \geq \frac{W(1,\cdots,k)}{2}$.
\end{lemma}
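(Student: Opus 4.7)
The plan is to fix a single node $s \in \calI \cup \calJ$, show the node-wise bound $\tilde{f}_k \geq \tfrac{1}{2} W(1,\cdots,k;s)$, and then take the maximum over $s$ to obtain the claimed inequality. This is a standard symmetrization trick that exploits both the LP node-capacity constraints (\ref{matching1})--(\ref{matching2}) and the pairwise constraint $\delta_{kk'} + \delta_{k'k} = 1$ from (\ref{prec1}).

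Concretely, for each $l \leq k$ and the chosen node $s$, the LP constraint at $s$ gives
\[
\tilde{f}_l \;\geq\; d_s^l + \sum_{k' \in \calK} d_s^{k'}\,\delta_{k' l}.
\]
Because of the ordering (\ref{order}) we have $\tilde{f}_l \leq \tilde{f}_k$ for all $l \leq k$. I would multiply the above inequality by the nonnegative weight $d_s^l$ and sum over $l = 1, \ldots, k$, then upper-bound the left-hand side by $\tilde{f}_k \sum_{l=1}^k d_s^l$. Dropping the (nonnegative) contributions from indices $k' > k$ in the double sum, this yields
\[
\tilde{f}_k \cdot \Big(\sum_{l=1}^k d_s^l\Big) \;\geq\; \sum_{l=1}^k (d_s^l)^2 \;+\; \sum_{\substack{l,l'\leq k \\ l\neq l'}} d_s^l\, d_s^{l'}\, \delta_{l' l}.
\]

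The key step is handling the cross-term. Renaming indices $l \leftrightarrow l'$ in the cross sum gives $\sum_{l\neq l'} d_s^l d_s^{l'} \delta_{l' l} = \sum_{l\neq l'} d_s^l d_s^{l'} \delta_{l l'}$, so averaging the two and using $\delta_{l l'} + \delta_{l' l} = 1$ collapses the cross sum to $\tfrac{1}{2} \sum_{l\neq l'} d_s^l d_s^{l'}$. Writing $A := \sum_{l=1}^k d_s^l = W(1,\cdots,k;s)$, the right-hand side becomes
\[
\sum_{l=1}^k (d_s^l)^2 + \tfrac{1}{2}\Big(A^2 - \sum_{l=1}^k (d_s^l)^2\Big) \;=\; \tfrac{1}{2}A^2 + \tfrac{1}{2}\sum_{l=1}^k (d_s^l)^2 \;\geq\; \tfrac{1}{2}A^2.
\]
Dividing by $A$ (assuming $A > 0$; otherwise the claim is trivial) yields $\tilde{f}_k \geq \tfrac{1}{2} W(1,\cdots,k;s)$.

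Finally, maximizing over $s \in \calI \cup \calJ$ and invoking the definition (\ref{load}) of $W(1,\cdots,k)$ gives $\tilde{f}_k \geq \tfrac{1}{2} W(1,\cdots,k)$. The only subtlety I anticipate is verifying that restricting the sum $\sum_{k' \in \calK}$ to $k' \leq k$ is legitimate and that the symmetrization argument is correctly accounting for the diagonal terms $l = l'$ (which are excluded from the pairwise constraint but reappear as $(d_s^l)^2$ terms, contributing the slack that ultimately makes the bound hold).
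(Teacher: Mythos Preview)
Your proof is correct and follows essentially the same approach as the paper: multiply the LP node constraint by $d_s^l$, sum over $l\le k$, symmetrize the cross term using $\delta_{ll'}+\delta_{l'l}=1$, and then use the ordering $\tilde f_l\le \tilde f_k$ to pull out $\tilde f_k$ before dividing by $A=W(1,\dots,k;s)$. The two subtleties you flagged are harmless---dropping the $k'>k$ terms only discards nonnegative quantities, and your separate handling of the diagonal $(d_s^l)^2$ terms is in fact slightly cleaner than the paper's version, which folds the diagonal into the double sum.
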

\begin{proof}
	The proof uses similar ideas as in Gandhi, et al.~\cite{gandhi2008improved} and Kim~\cite{kim2005data}. Using constraint~\dref{matching1}, for any source node $i \in \calI$, we have
	\begin{equation}
	\begin{aligned}
	d_{i}^l \tilde{f_l} \geq  (d_{i}^l)^2+\sum_{l^\prime \in \calK} d_{i}^l d_{i}^{l^\prime} \delta_{l^\prime l}
	\end{aligned}
	\end{equation}
	which implies that,
	\begin{equation}
	\begin{aligned}
	\label{help}
	\sum_{l=1}^k d_{i}^l \tilde{f_l} \geq & \sum_{l=1}^k (d_i^l)^2 +\sum_{l=1}^k \sum_{l^\prime=1}^k d_i^{l^\prime} d_i^l \delta_{l^\prime l}\\
	=& \frac{1}{2} \bigg ( 2 \times \sum_{l=1}^k (d_i^l)^2\\
	&+ \sum_{l=1}^k \sum_{l^\prime=1}^k \big (d_i^{l^\prime} d_i^l \delta_{l^\prime l}+d_i^{l^\prime} d_{i}^l \delta_{l l^\prime} \big ) \bigg )
	\end{aligned}
	\end{equation}
	We simplify the right-hand side of~\dref{help}, using constraint~\dref{prec1}, combined with the following equality
	\begin{equation}
	\sum_{l=1}^k (d_{i}^l)^2+ \sum_{l=1}^k \sum_{l^\prime=1}^k d_i^{l^\prime} d_i^l =(\sum_{l=1}^k d_i^l)^2,
	\end{equation}
	and conclude that
	\begin{equation}
	\begin{aligned}
	\sum_{l=1}^k d_{i}^l \tilde{f_l} \geq & \frac{1}{2} \sum_{l=1}^k (d_{i}^l)^2 + \frac{1}{2} (\sum_{l=1}^k d_{i}^l)^2 \\
	\geq & \frac{1}{2} (\sum_{l=1}^k d_{i}^l)^2 =\frac{1}{2} (W(1,\cdots,k;i))^2
	\end{aligned}
	\end{equation}
	Where the last equality follows from Definition~\ref{nodeload}.
	Similar argument results in the following inequality for any destination node $j \in \calJ$, i.e.,
	\begin{equation*}
	\sum_{l=1}^k d_{j}^l \tilde{f_l} \geq \frac{1}{2} (W(1,\cdots,k;j))^2.
	\end{equation*}
	Now consider the node $s^\star$ which has the maximum load induced by the first $k$ coflows, namely, $W(1,\cdots,k)=W(1,\cdots,k;s^\star)$.
	\begin{equation}
	\begin{aligned}
	\tilde{f_k} W(1,\cdots,k;s^\star)&= \tilde{f_k}\sum_{l=1}^k d_{s^\star}^l\\
	& \geq \sum_{l=1}^k d_{s^\star}^l \tilde{f_l}\\
	& \geq \frac{1}{2} (W(1,\cdots,k;s^\star))^2
    \end{aligned}
	\end{equation}
	This implies that,
    \begin{equation}
    \tilde{f_k} \geq \frac{1}{2} W(1,\cdots,k;s^\star)=\frac{1}{2} W(1,\cdots,k).
    \end{equation}
This completes the proof.
\end{proof}

Note that $W(1,\cdots,k)$ is a lower bound on the time that it takes for all the first $k$ coflows to be completed (as a result of the capacity constraints in the optimization~\dref{optimization}). Hence, Lemma~\ref{bound1} states that by allowing ordering variables to be fractional, completion time of coflow $k$ obtained from (LP) is still lower bounded by half of $W(1,\cdots,k)$.
\subsection{Proof of Theorem~\ref{detalgper} and Corollary~\ref{cor1}}
\begin{proof}[Proof of Theorem~\ref{detalgper}]
Recall that we use $\hat{f}_k$ to denote the actual coflow completion times under our deterministic algorithm. Suppose flow $(i,j,k)$ is the last flow of coflow $k$ that is completed. In general, Algorithm~\ref{Alg1} may preempt a flow several times during its execution. For now, suppose flow $(i,j,k)$ is not preempted and use $t_k$ to denote the time when its transmission is started (the arguments can be easily extended to the preemption case as we show at the end of the proof). Therefore
\begin{equation}
\label{eq2}
\hat{f_k}=\hat{f}_{ij}^k=t_k+d_{ij}^k
\end{equation}
From the algorithm description, $t_k$ is the first time that both links $i$ and $j$ are idle and there are no higher priority flows to be scheduled (i.e., there is no flow $(i,j,k^\prime)$ from $i$ to $j$ with $k^\prime <k$ in the list). By definition of $W(1,\cdots,k;s)$, node $s$, $s\in \{i,j\}$, has $W(1,\cdots,k;s)-d_{ij}^k$ data units to send or receive by time $t_k$. Since the capacity of all links are normalized to $1$, it should hold that
\begin{equation*}
\begin{aligned}
t_k &\leq r_k+W(1,\cdots,k;i)-d_{ij}^k+W(1,\cdots,k;j)-d_{ij}^k\\
&\leq r_k+2W(1,\cdots,k)-2d_{ij},
\end{aligned}
\end{equation*}
where the last inequality is by Definition~\ref{load}.
Combining this inequality with equality~\dref{eq2} yields the following bound on $\hat{f_k}$.
\begin{equation*}
\hat{f_k} \leq r_k+2 W(1,\cdots,k)
\end{equation*}
Using Lemma~\ref{bound1} and constraint~\dref{rt}, we can conclude that
\begin{equation*}
\hat{f_k} \leq 5 \tilde{f}_k,
\end{equation*}
which implies that
\begin{equation*}
\sum_{k=1}^K w_k \hat{f_k} \leq 5\sum_{k=1}^K w_k \tilde{f}_k.
\end{equation*}
This shows an approximation ratio of $5$ for Algorithm~\ref{Alg1} using Lemma~\ref{optlb}. Finally, if flow $(i,j,k)$ is preempted, the above argument can still be used by letting $t_k$ to be the starting time of its last piece and $d_{ij}^k$ to be the remaining size of its last piece at time $t_k$. This completes the proof.	
\end{proof}

\begin{proof}[Proof of Corollary~\ref{cor1}]
When all coflows are released at time $0$, $t_k \leq W(1,\cdots,k)-d_{ij}^k+W(1,\cdots,k)-d_{ij}^k$. The rest of the argument
is similar to the proof of Theorem~\ref{detalgper}. Therefore, the algorithm has approximation ratio of $4$ when all coflows are release at time $0$.
\end{proof} 
\section{Extension to Online Algorithm}
\label{online}
Similar to previous work~\cite{qiu2015minimizing,khuller2016brief}, Algorithm~\ref{Alg1} is an offline algorithm, and requires the complete knowledge of the flow sizes and release times. While this knowledge can be learned in long running services, developing online algorithms that deal with the dynamic nature and unavailability of this information is of practical importance. One natural extension of our algorithm to an online setting, assuming that the coflow information revealed at its release time, is as follows: Upon each coflow arrival, we re-order the coflows by re-solving the (LP) using the remaining coflow sizes and the newly arrived coflow, and update the list. Given the updated list, the scheduling is done as in Algorithm~\ref{Alg1}. To reduce complexity of the online algorithm, we may re-solve the LP once in every $T$ seconds, for some $T$ that can be tuned, and update the list accordingly. We leave theoretical and experimental study of this online algorithm as a future work. 
\section{Empirical Evaluations}
\label{simulation}
In this section, we present our simulation results and evaluate the performance of our algorithm for both cases of with and without release dates, under both synthetic and real traffic traces.
We also simulate the deterministic algorithms proposed in~\cite{oursigmetricswork,qiu2015minimizing} and Varys~\cite{chowdhury2014efficient} and compare their performance with the performance of our algorithm.
\subsection{Workload}
\label{wldescription}
We evaluate algorithms under both synthetic and real traffic traces.

\textbf{Synthetic traffic}: To generate synthetic traces we slightly modify the model used in~\cite{qiu2016experimental}. We consider the problem of size $K=160$ coflows in a switch network with $N=16$ input and output links. We denote by $M$ the number of non-zero flows in each coflow. We consider two cases:
\begin{itemize}[leftmargin=*]
		\item Dense instance: For each coflow, $M$ is chosen uniformly from the set $\{N, N+1, ..., N^2\}$. Therefore, coflows have $O(N^2)$ non-zero flows on average.
		\item Combined instance: Each coflow is sparse or dense with probability $1/2$. For each sparse coflow, $M$ is chosen uniformly from the set $\{1,2,...,N\}$, and for each dense coflow $M$ is chosen uniformly from the set $\{N, N+1, ..., N^2\}$.
\end{itemize}
Given the number $M$ of flows in each coflow, $M$ pairs of input and output links are chosen randomly. For each pair that is selected, an integer flow size (processing requirement) $d_{ij}$ is randomly selected from the uniform distribution on $\{1, 2,..., 100\}$. For the case of scheduling with release dates, we generate the coflow inter-arrival times uniformly from $[1, 100]$. We generate $100$ instances for each case and report the average algorithms' performance.

\textbf{Real traffic}: This workload was also used in~\cite{chowdhury2014efficient, qiu2015minimizing, oursigmetricswork}. The workload is based on a Hive/MapReduce trace at Facebook that was collected from a 3000-machine cluster with $150$ racks. In this trace, the following information is provided for each coflow: arrival time of the coflow in millisecond, locations of mappers (rack number to which they belong), locations of reducers (rack number to which they belong), and the amount of shuffle data in Megabytes for each reducer. We assume that shuffle data of each reducer in a coflow is evenly generated from all mappers specified for that coflow. The data trace consists of $526$ coflows in total from very sparse coflows (the most sparse coflow has only $1$ flow) to very dense coflows (the most dense coflow has $21170$ flows.). Similar to~\cite{qiu2015minimizing}, we filter the coflows based on the number of their non-zero flows, $M$. Apart from considering all coflows ($M\geq 1$), we consider three coflow collections filtered by the conditions $M \geq 10$, $M \geq 30$, and $M \geq 50$. In other words, we use the following 4 collections:
\begin{itemize}[leftmargin=*]
	\item All coflows,
	\item Coflows with $M \geq 10$,
	\item Coflows with $M \geq 30$,
	\item Coflows with $M \geq 50$.
\end{itemize}
Furthermore, the original cluster had a $10 : 1$ core-to-rack oversubscription ratio with a total bisection bandwidth of $300$ Gbps. Hence, each link has a capacity of $128$ MBps. To obtain the same traffic intensity offered to our network (without oversubscription), for the case of scheduling coflows with release dates, we need to scale down the arrival times of coflows by $10$. For the case of without release dates, we assume that all coflows arrive at time $0$.
\subsection{Algorithms}\label{sec:algorithms}
We simulate four algorithms: the algorithm proposed in this paper, Varys~\cite{chowdhury2014efficient}, the deterministic algorithm in~\cite{qiu2015minimizing}, and the deterministic algorithm in~\cite{oursigmetricswork}. We briefly overview these algorithms and also elaborate on the backfilling strategy that has been combined with the deterministic algorithms in~\cite{qiu2015minimizing, oursigmetricswork} to avoid under utilization of network resources.
	
\textbf{1. Varys~\cite{chowdhury2014efficient}}: Scheduling and rate assignments under Varys were explained in detail in Section~\ref{Overview}. There is a parameter $\delta$ in the original design of Varys that controls the tradeoff between fairness and completion time. Since we focus on minimizing the total completion time of coflows, we set $\delta$ to $0$ which yields the best performance of Varys. In this case, upon arrival or completion of a coflow, the coflow ordering is updated and the rate assignment is done iteratively as described in Section~\ref{Overview}.

\textbf{2. Interval-Indexed-Grouping~\cite{qiu2015minimizing}}: The algorithm requires discrete time (i.e., time slots) and is based on an interval-indexed formulation of a polynomial-time linear program (LP) as follows. The time is divided into geometrically increasing intervals. The binary decision variables $x_{lk}$ are introduced which indicate whether coflow $k$ is scheduled to complete within the $l$-th interval $(t_l,t_{l+1}]$. Using these binary variables, a lower bound on the objective function is formulated subject to link capacity constraints and the release date constraints. The binary variables are then relaxed leading to an LP whose solution is used for ordering coflows. More precisely, the relaxed completion time of coflow $k$ is defined as
$
f_k=\sum_l t_l x_{lk},
$
where $t_l$ is the left point of the $l$-th interval and $x_{lk} \in [0,1]$ is the relaxed decision variable. Based on the optimal solution to this LP, coflows are listed in an increasing order of their relaxed completion time. For each coflow $k$ in the list, $k=1,...,K$, we compute effective size of the cumulated first $k$ coflows in the list, $W(1,\cdots,k)$. All coflows that fall within the same time interval according to value of $W(1,\cdots,k)$ are grouped together and treated as a single coflow and scheduled so as to minimize its completion time.
Scheduling of coflows within a group makes use of the Birkhoff-von Neumann decomposition. If two data units from coflows $k $ and $k^\prime$ within the same group use the same pair of input and output, and $k$ is ordered before $k^\prime$, then we always process the data unit from coflow $k$ first. For backfilling, when we use a schedule that matches input $i$ to output $j$, if there is no more service requirement on the pair of input $i$ and output $j$ for some coflow in the current partition, we backfill in order from the flows on the same pair of ports in the subsequent coflows. We would like to emphasize that this algorithm needs to discretize time and is based on matching source nodes to destination nodes. We select the time unit to be $1/128$ second as suggested in~\cite{qiu2015minimizing} so that each port has a capacity of $1$ MB per time unit. We refer to this algorithm as `\underline{LP-II-GB}', where II stands for Interval-Indexed, and GB stands for Grouping and Backfilling.

\textbf{3. Ordering-Variable-Grouping~\cite{oursigmetricswork}}: We implement the deterministic algorithm in~\cite{oursigmetricswork}. Linear programming formulation is the same as LP in \dref{LP}. Coflows are then grouped based on the optimal solution to the LP. To schedule coflows of each group, we construct a single aggregate coflow denote by $D$ and schedule its flows to optimize its completion time. We assign transmission rate $x_{ij} = d_{ij}/W(D)$ to the flow from source node $i$ to destination node $j$ until its completion. Moreover, the continuous backfilling is done as follows: After assigning rates to aggregate coflow, we increase $x_{ij}$ until either capacity of link $i$ or link $j$ is fully utilized. We continue until for any node, either source or destination node, the summation of rates sum to one. We also transmit flows respecting coflow order inside of each partition. When there is no more service requirement on the pair of input $i$ and output $j$ for coflows of current partition, we backfill (transmit) in order from the flows on the same pair of ports from the subsequent coflows. We refer to this algorithm as `\underline{LP-OV-GB}', where OV stands for ordering variables, and GB stands for Grouping and Backfilling.

\textbf{4. Algorithm~\ref{Alg1}}: We implement our algorithm as described in Algorithm~\ref{Alg1}, and refer to it as `\underline{LP-OV-LS}', where LS stands for list scheduling.
\subsection{Evaluation Results}
\textbf{Performance of Our Algorithm.}
We report the ratios of total weighted completion time obtained from Algorithm~\ref{Alg1} and the optimal value of relaxed linear program~\dref{LP} (which is a lower bound on the optimal value of the coflow scheduling problem) to verify Theorem~\ref{detalgper} and Corollary~\ref{cor1}.
We only present results of the simulations using the real traffic trace, with equal weights and random weights. For the case of random weights, the weight of each coflow is chosen uniformly at random from the interval $[0,1]$. The results are more or less similar for other collections and for synthetic traffic traces and all are consistent with our theoretical results.

Table~\ref{ratio} shows the performance ratio of the deterministic algorithm for the cases of with and without release dates. All performances are within our theoretical results indicating the approximation ratio of at most $4$ when all coflows release at time $0$ and at most $5$ when coflows have general release dates. In fact, the approximation ratios for the real traffic trace are much smaller than $4$ and $5$ and very close to $1$.\\
\begin{table}
	\caption{Performance Ratio of Algorithm~\ref{Alg1}}
	\centering
	\begin{tabular}{c c c} % centered columns (4 columns)
		\hline\hline %inserts double horizontal lines
		Case & Equal weights & Random weights \\ [0.5ex] % inserts table
		%heading
		\hline % inserts single horizontal line
		Without release dates & $1.05$ & $1.06$ \\ % inserting body of the table
		With release dates & $1.034$ & $1.038$ \\ % [1ex] adds vertical space
		\hline %inserts single line
	\end{tabular}
	\label{ratio}
\end{table}

\textbf{Performance Comparison with Other Algorithms.} Now, we compare the performance of Algorithm~\ref{Alg1} with LP-II-GB, LP-OV-GB, and Varys. We set all the weights of coflows to be equal to one.

\textit{1. Performance evaluation under synthetic traffic:} For each of the two instances explained in Section~\ref{wldescription}, we randomly generate $100$ different traffic traces and compute the average performance of algorithms over the traffic traces.

Figure~\ref{synthwithoutrd} and~\ref{synthwithrd} depict the average result of our simulations (over $100$ dense and $100$ combined instances) for the zero release dates and general release dates, respectively. As we see, Algorithm~\ref{Alg1} (LP-OV-LS) outperforms Varys and LP-II-GB by almost $30\%$, and LP-OV-GB by almost $11 \%$ in dense instance for both general and zero release dates. In combined instance, the improvements are $35 \%$, $30\%$, and $17\%$ when all coflows are released at time $0$, and $28 \%$, $29\%$, and $17\%$ for the case of general release dates over Varys, LP-II-GB, and LP-OV-GB, respectively.
\begin{figure}[t]
	\centering
	\includegraphics[width=2.25in,height=1.45in]{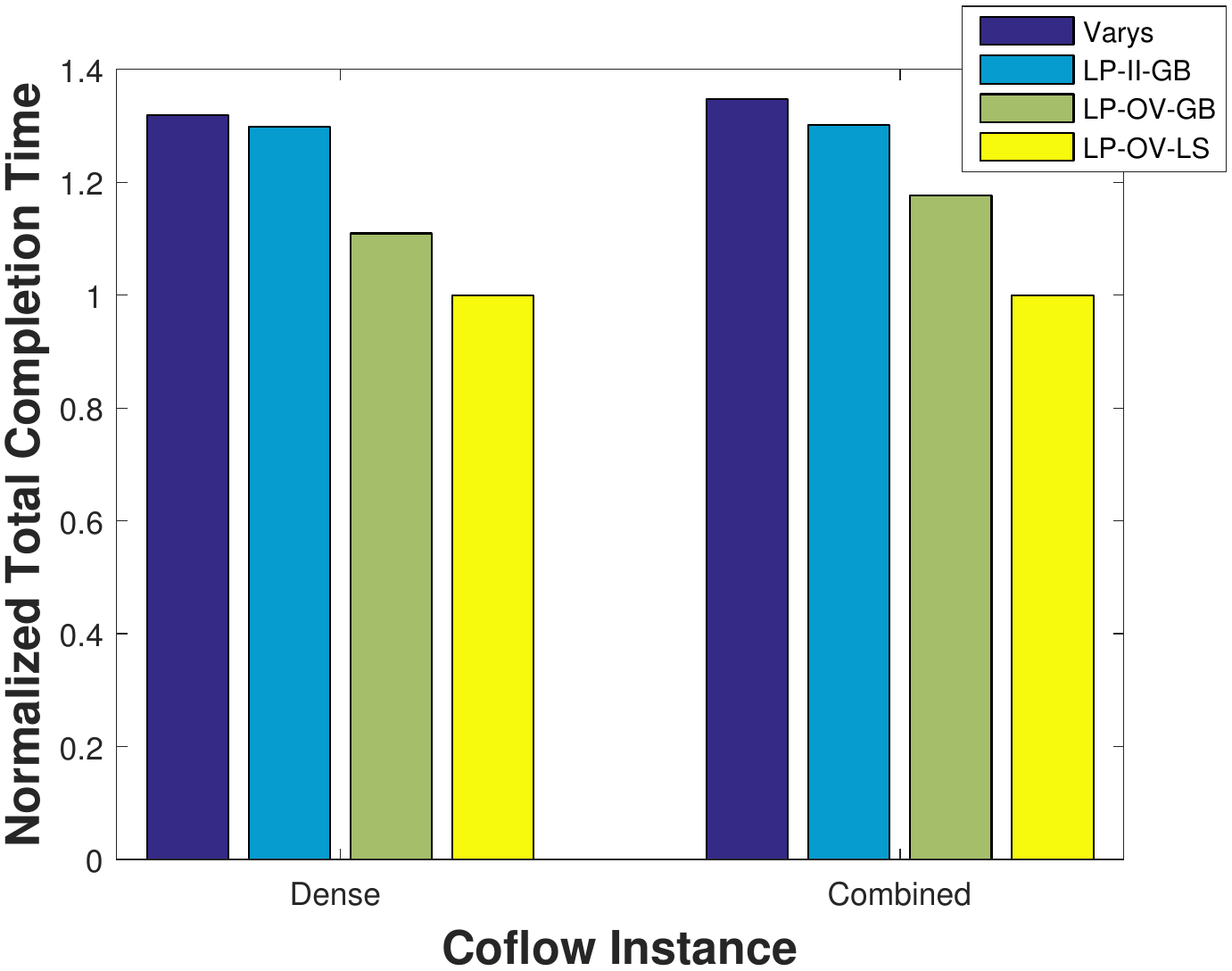}%
	\caption{Performance of Varys, LP-II-GB, LP-OV-GB, and LP-OV-LS when all coflows release at time $0$ for $100$ random dense and combined instances, normalized with the performance of LP-OV-LS.}
	\label{synthwithoutrd}%
\end{figure}
\begin{figure}[t]
	\centering
	\includegraphics[width=2.25in,height=1.45in]{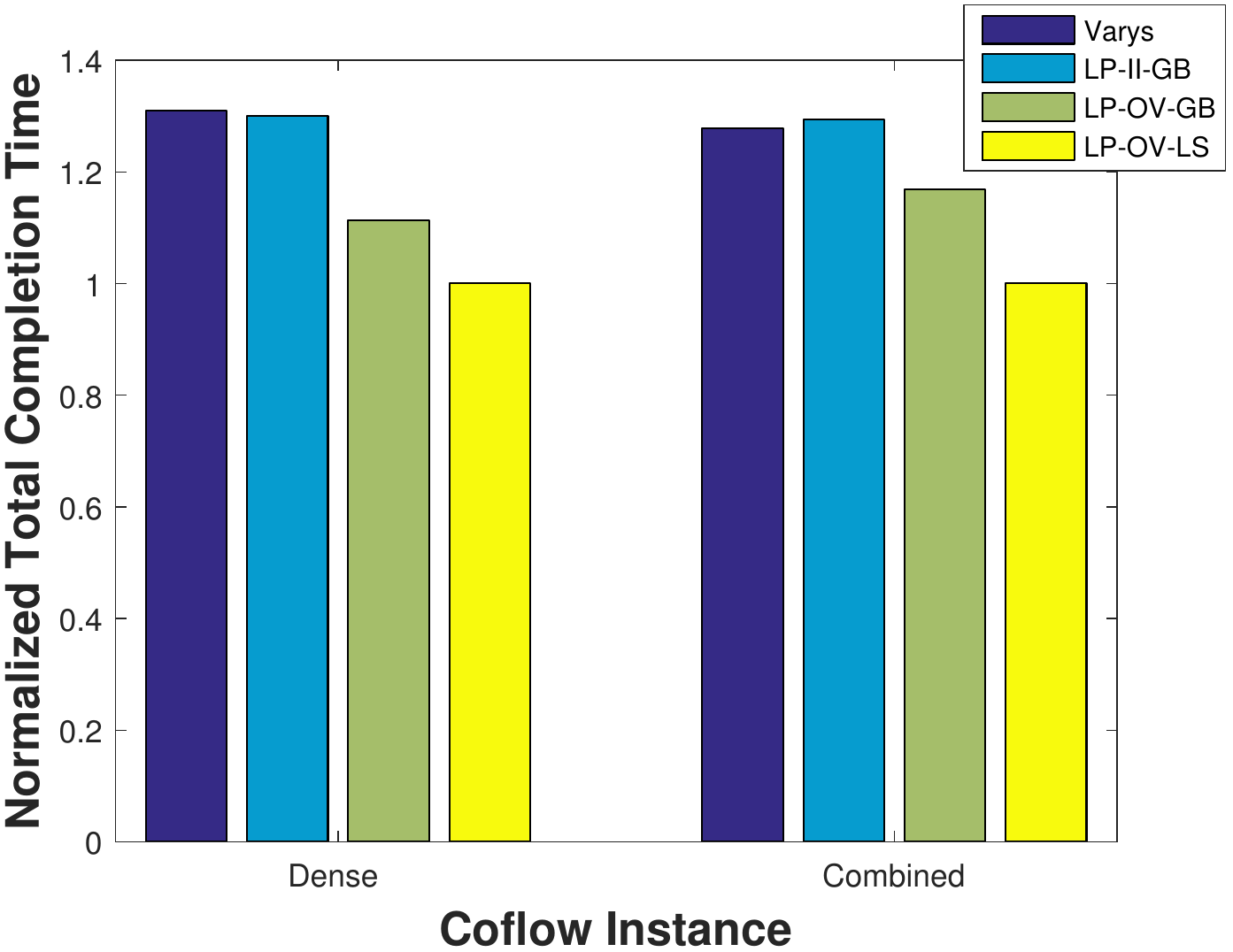}%
	\caption{Performance of Varys, LP-II-GB, LP-OV-GB, and LP-OV-LS in the case of release dates for $100$ random dense and combined instances, normalized with the performance of LP-OV-LS.}
	\label{synthwithrd}%
\end{figure}

This workload is more intensive in the number of non-zero flows; however, more uniform in the flow sizes and source-destination pairs in comparison to the real traffic trace. The real traffic trace (described in Section~\ref{wldescription}) contains a large number of sparse coflows; namely, about $50 \%$ of coflows have less than $10$ flows. Also, it widely varies in terms of flow sizes and source-destination pairs in the network. We now present evaluation results under this traffic.

\textit{2. Performance evaluation under real traffic:} We ran simulations for the four collections of coflows described in Section~\ref{wldescription}. We normalize the total completion time under each algorithm by the total completion time under Algorithm~\ref{Alg1} (LP-OV-LS).

Figure~\ref{withoutrd} shows the performance of different algorithms for different collections of coflows when all coflows are released at time $0$. LP-OV-LS outperforms Varys by almost $112-117 \%$ in different collections. It also constantly outperforms LP-II-GB and LP-OV-GB by almost $74-78 \%$ and $63-68\%$, respectively.

Figure~\ref{withrd} shows the performance of different algorithms for different collections of coflows for the case of release dates. LP-OV-LS outperforms Varys by almost $24 \%$, $65 \%$, $91 \%$, and $99 \%$ for all coflows, $M \geq 10$, $M \geq 30$, $M \geq 50$, respectively. It also outperforms LP-II-GB for $40 \%$, $62 \%$, $71 \%$, and $82 \%$, and LP-OV-GB by $19 \%$, $54 \%$, $64 \%$, and $73 \%$, respectively.
\begin{figure}[t]
	\centering
	\includegraphics[width=2.25 in,height=1.45in]{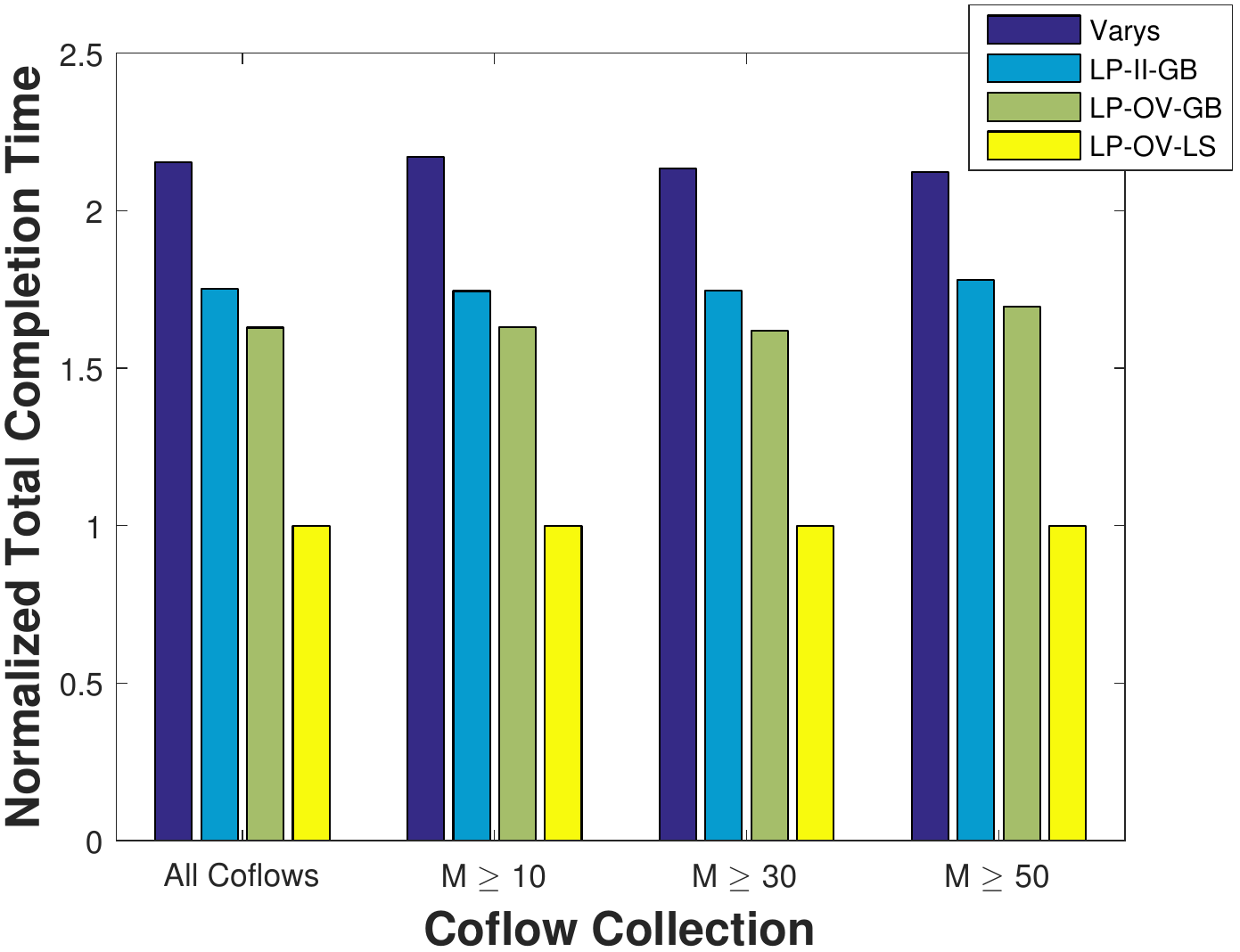}%
	\caption{Performance of Varys, LP-II-GB, LP-OV-GB, and LP-OV-LS when all coflows release at time $0$, normalized with the performance of LP-OV-LS, under real traffic trace.}
	\label{withoutrd}%
\end{figure}
\begin{figure}[t]
	\centering
	\includegraphics[width=2.25 in,height=1.45in]{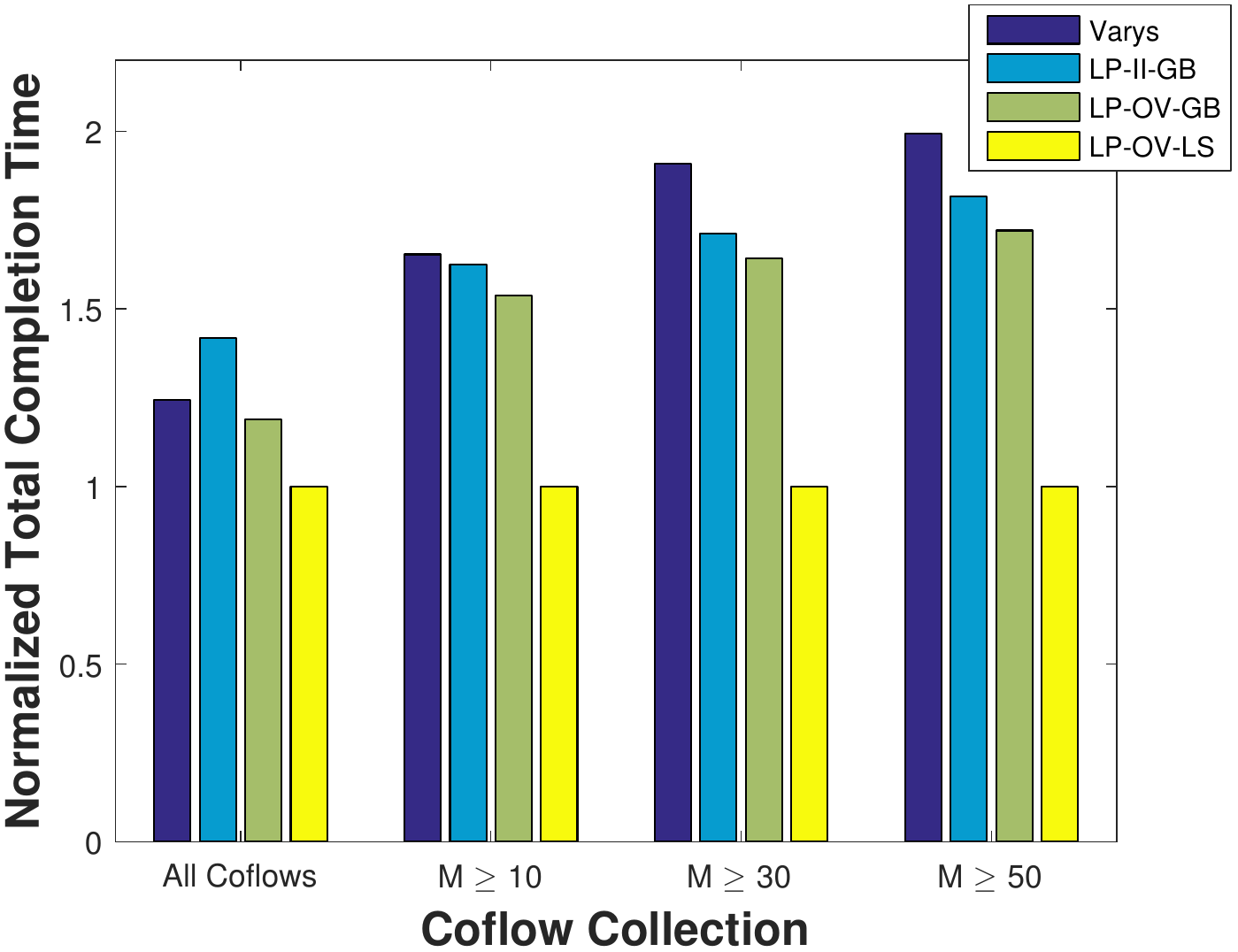}%
	\caption{Performance of Varys, LP-II-GB, LP-OV-GB, and LP-OV-LS in the case of release dates, normalized with the performance of LP-OV-LS, under real traffic trace.}
	\label{withrd}%
\end{figure} 
\section{Concluding Remarks}\label{conclusion}
In this paper, we studied the problem of scheduling of coflows with release dates to minimize their total weighted completion time, and proposed an algorithm with improved approximation ratio. We also conducted extensive experiments to evaluate the performance of our algorithm, compared with three algorithms proposed before, using both real and synthetic traffic traces. Our experimental results show that our algorithm in fact performs very close to optimal.

As future work, other realistic constraints such as precedence requirement or deadline constraints need to be considered. Also, theoretical and experimental evaluation of the performance of the proposed online algorithm is left for future work. While we modeled the datacenter network as a giant non-blocking switch (thus focusing on rate allocation), the routing of coflows in the datacenter network is also of great importance for achieving the quality of service.

\bibliography{IEEEabrv,Bibl}
\bibliographystyle{IEEEtran}

\appendix[NP-Completeness and Counter Example]
\label{2appexp}
\label{counterexample}

We first show NP-completeness of the coflow scheduling problem as formulated in optimization~\dref{optimization}. This is done through reduction from the concurrent open shop problem, in which a set of $K$ jobs and a set of $N$ machines are given. Each job consists of some tasks where each task is associated with a size and a specific machine in which it should be processed. We convert each job to a coflow by constructing a diagonal demand matrix~\cite{qiu2015minimizing}. By this construction, the constraints~\dref{capcons1} and~\dref{capcons2} are equivalent. The optimal solution to optimization~\dref{optimization} consists of none-negative transmission rates ${x^k_{j,j}}^\star(t)$ that sum to at most one on destination node $j$ at each time $t \in [0,T]$. However, in the concurrent open shop problem each machine can work on one task at a time which can be translated to zero and one transmission rates in the jargon of the coflow scheduling problem. Now, we show that given an optimal solution with rates ${x^k_{j,j}}^\star(t)$ to optimization~\dref{optimization} for the converted coflow scheduling problem, we can always transform it to a feasible solution for the original concurrent open shop problem. To do so, we consider destination node $j$ (machine $j$) and start from the last flow (task) that completes on this node. If there are multiple last flows, we choose one arbitrarily. We denote by ${f^k_{j,j}}^\star$ its optimal finishing time and by $d^k_{j,j}$ its size. We then set all transmission (processing) rates of this flow (task) to zero from time $0$ to ${f^k_{j,j}}^\star-d^k_{j,j}$, and to one from time ${f^k_{j,j}}^\star-d^k_{j,j}$ to ${f^k_{j,j}}^\star$. We adjust rates of other flows such that transmission rates sum to at most one at every time while all the flows are guaranteed to be processed before their completion time (which is given by the optimal solution). This can be easily done by increasing ${x^{k^\prime}_{j,j}}^\star(t)$ for $t \in [0,{f^k_{j,j}}^\star-d^k_{j,j}]$ by $\Delta x^{k^\prime}_{j,j}(t)$ determined as follows
\begin{equation*}
\Delta x^{k^\prime}_{j,j}(t)=\frac{\int_{{f^k_{j,j}}^\star-d^k_{j,j}}^{{f^k_{j,j}}^\star}{x^{k^\prime}_{j,j}}^\star(\tau) d\tau}{d^k_{j,j}} \times {x^{k}_{j,j}}^\star(t)
\end{equation*} 
By doing so, finishing time of the last flow does not change, and finishing time of other flows may decrease. The iterative procedure is repeated until processing rates of all flows converted to zero or one on node $j$. Therefore, we end up with possibly better solution in terms of total completion times of flows for node $j$ with zero-one rates. We apply this mechanism to all nodes; hence, the total completion time of the transformed solution is as good as the optimal solution. Thus, if an algorithm can solve the coflow scheduling problem in polynomial time, it can do so for concurrent open shop problem which contradicts with its NP-completeness. This completes the argument and NP-completeness of coflow scheduling problem is concluded.\\
Moreover, as we discussed in Section~\ref{Overview}, the $2$-approximation algorithms for the concurrent open shop problem cannot be directly applied to achieve $2$-approximation algorithms for the coflow scheduling problem. This is because given an ordering of $K$ coflows, there does not always exist a schedule in which the first coflow completes at time $W(1)$, the second coflow completes at time $W(1,2)$, and so on, until the last coflow completes at time $W(1,\cdots,K)$ (recall Definition~\ref{def2} for definition of $W(1,\cdots,k)$). We provide a counter example to show this.
\begin{example}
	Consider a $3 \times 3$ network with $2$ coflows as shown in Figure~\ref{network3}. One can force the ordering algorithm to output orange coflow as the first coflow and the green coflow as the second one in the list (e.g., by means of assigning appropriate weight to coflows). To finish the first coflow (orange coflow) in $W(1)$, transmission rates are assigned as shown in Figure~\ref{schedule1}. To avoid under utilization of network resources, the remaining capacities are dedicated to flows of coflow $2$ (green coflow). After $W(1)=2$ units of time, coflow $1$ completes and the remaining flows of coflow $2$ is as shown in Figure~\ref{schedule2}, therefore, one needs $2$ more units of time to complete remaining flows of coflow $2$. Hence, coflow $2$ completes at time $4 > W(1,2)=3$.
\begin{figure}[t]
	\centering
	\includegraphics[trim={0 1.1in 0 1.5in}, clip,width=2.6 in, height=1.3 in]{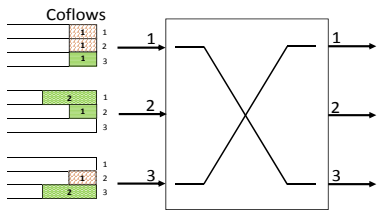}
	\caption{Two coflows in a $3 \times 3$ switch architecture. Flow sizes are depicted inside each flow.}
	\label{network3}
\end{figure}
\begin{figure}[t]
	\centering
	\begin{subfigure}[t]{1\columnwidth}
		\centering
		\includegraphics[trim={0 1.1in 0 1.5in}, clip,width=2.6 in, height=1.3 in]{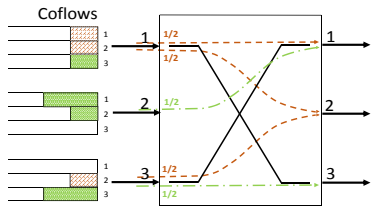}%
		\caption{Transmission rates so as to complete orange coflow at time $2$.}
		\label{schedule1}%
	\end{subfigure}\hfill
	% 	\vspace{-.1in}
	\begin{subfigure}[t]{1\columnwidth}
		\centering
		\includegraphics[trim={0 1.1in 0 1.5in}, clip,width=2.6 in, height=1.3 in]{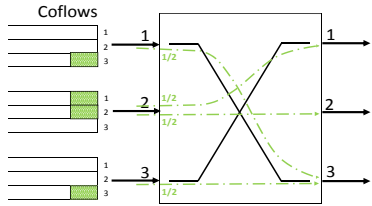}%
		\caption{Remaining flows of green coflow at time $2$ and rate assignment to complete its flows at time $4$.}%
		\label{schedule2}%
	\end{subfigure}
	\caption{Inaccuracy of proposed algorithm in~\cite{luo2016towards} .}
	\label{schedule}
	\vspace{-.2in}
\end{figure}
\end{example}
In fact, the $2$-approximation algorithm in~\cite{luo2016towards}, for coflow scheduling when all the release times are zero, relies on the assumption that such a schedule exists which, as we showed by the counter example, is not always true and hence the $4$-approximation algorithm proposed in this paper is the best known approximation algorithm in this case.

\end{document}